
\documentclass{article}

\usepackage{microtype}
\usepackage{graphicx}
\usepackage{subfigure}
\usepackage{booktabs} 

\usepackage{hyperref}


\usepackage[accepted]{icml2025}


\usepackage{amsmath}
\usepackage{amssymb}
\usepackage{mathtools}
\usepackage{amsthm}
\usepackage{multirow}
\usepackage{dsfont}
\usepackage{enumitem}

\usepackage{tikz}
\usepackage{pgfplots}
\usepackage{dsfont}
\pgfplotsset{compat=1.15}
\usetikzlibrary{shapes, arrows.meta, positioning, fit, calc, pgfplots.fillbetween}

\usepackage[capitalize,noabbrev]{cleveref}

\theoremstyle{plain}
\newtheorem{theorem}{Theorem}[section]

\newtheorem{lemma}[theorem]{Lemma}

\theoremstyle{definition}

\theoremstyle{remark}

\usepackage[textsize=tiny]{todonotes}

\usepackage{xspace}
\usepackage{bm}

\newcommand{\Require}{\STATE \textbf{Require: }}
\newcommand{\Ensure}{\STATE \textbf{Ensure: }}
\newcommand{\State}{\STATE }
\newcommand{\For}{\FOR }

\newcommand{\EndFor}{\ENDFOR }

\newcommand{\Comment}{\COMMENT }

\newcommand{\myparagraph}[1]{\vspace{1ex}\noindent{\bf #1}}
\def\name{Confidential Guardian\xspace}
\def\attack{Mirage\xspace}
\def\uncertreg{$\mathcal{X}_\text{unc}$\xspace}

\newcommand{\prover}{\ensuremath{\mathcal{P}}\xspace}
\newcommand{\verifier}{\ensuremath{\mathcal{V}}\xspace}
\usepackage{stmaryrd} 
\newcommand{\comm}[1]{\ensuremath{\llbracket #1 \rrbracket}}
\newcommand{\relu}{\ensuremath{\texttt{ReLU}}\xspace}

\newif\ifarxiv

\icmltitlerunning{\textsc{Confidential Guardian}: Cryptographically Prohibiting the Abuse of Model Abstention}

\begin{document}

\twocolumn[
\icmltitle{\textsc{Confidential Guardian}:\\
           Cryptographically Prohibiting the Abuse of Model Abstention}




\begin{icmlauthorlist}
\icmlauthor{Stephan Rabanser}{uoft,vector} 
\icmlauthor{Ali Shahin Shamsabadi}{brave}
\icmlauthor{Olive Franzese}{vector}
\icmlauthor{Xiao Wang}{northwestern}
\icmlauthor{Adrian Weller}{cambridge,alanturing}
\icmlauthor{Nicolas Papernot}{uoft,vector}
\end{icmlauthorlist}

\icmlaffiliation{uoft}{University of Toronto}
\icmlaffiliation{vector}{Vector Institute}
\icmlaffiliation{brave}{Brave Software}
\icmlaffiliation{northwestern}{Northwestern University}
\icmlaffiliation{cambridge}{University of Cambridge}
\icmlaffiliation{alanturing}{The Alan Turing Institute}

\icmlcorrespondingauthor{Stephan Rabanser}{\texttt{stephan@cs.toronto.edu}}

\icmlkeywords{Machine Learning, ICML}

\vskip 0.3in
]



\printAffiliationsAndNotice{}  

\begin{abstract}
Cautious predictions --- where a machine learning model abstains when uncertain --- are crucial for limiting harmful errors in safety-critical applications. In this work, we identify a novel threat: a dishonest institution can exploit these mechanisms to discriminate or unjustly deny services under the guise of uncertainty. We demonstrate the practicality of this threat by introducing an uncertainty-inducing attack called \emph{\attack}, which deliberately reduces confidence in targeted input regions, thereby covertly disadvantaging specific individuals. At the same time, \attack maintains high predictive performance across all data points. To counter this threat, we propose \emph{\name}, a framework that analyzes calibration metrics on a reference dataset to detect artificially suppressed confidence. Additionally, it employs zero-knowledge proofs of verified inference to ensure that reported confidence scores genuinely originate from the deployed model. This prevents the provider from fabricating arbitrary model confidence values while protecting the model’s proprietary details. Our results confirm that \name\ effectively prevents the misuse of cautious predictions, providing verifiable assurances that abstention reflects genuine model uncertainty rather than malicious intent.
\end{abstract}

\section{Introduction}

Institutions often deploy \emph{cautious predictions}~\citep{el2010foundations} in real-world, safety-sensitive applications—such as financial forecasts~\citep{9260038}, healthcare~\citep{kotropoulos2009linear,sousa2009ordinal,guan2020bounded}, criminal justice~\citep{wang2023pursuit}, and autonomous driving~\citep{ghodsi2021generating} --- where incorrect predictions can lead to catastrophic consequences. In these high-stakes settings, it is common to abstain from providing predictions when a Machine Learning (ML) model’s uncertainty is high, hence minimizing the risk of harmful errors~\citep{kotropoulos2009linear,liu2022incorporating,kompa2021second}. Such abstentions are often warranted by legitimate reasons, e.g., for inputs that are ambiguous or out-of-distribution. This naturally raises the question:
\begin{center}
    \textit{Can a dishonest institution abuse the abstention option in their ML-driven services for discriminatory practices?}
\end{center}
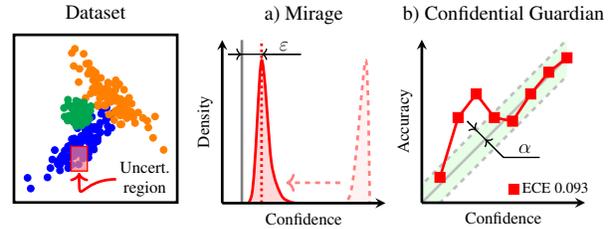
\begin{figure}
    \centering

\resizebox{\linewidth}{!}{
\begin{tikzpicture}[]

\node[align=center] (data) {
  \begin{tikzpicture}
  
  \pgfdeclarelayer{foreground}
    \pgfsetlayers{main,foreground}
  
    \begin{axis}[
        width=5cm,
        height=6cm,
        axis equal image,
        axis line style={ultra thick},
        title={\large Dataset},
		xtick=\empty,
		ytick=\empty,
        legend style={
                    at={(1, 0.05)}, 
                    anchor=south east, 
                    draw=none, 
                    fill=none, 
                    font=\small 
                },
                legend image post style={xscale=0.5},
    ]

        \addplot+[
            only marks,
            mark=*,
            mark options={scale=1.0, blue},
            forget plot
        ] table [x=x1, y=x2, col sep=comma] {figures/Gaussian1.csv};
        
        \addplot+[
            only marks,
            mark=*,
            mark options={scale=1.0, orange},
            	forget plot
        ] table [x=x1, y=x2, col sep=comma] {figures/Gaussian2.csv};

        \addplot+[
            only marks,
            mark=*,
            mark options={scale=1.0, Green},
            forget plot
        ] table [x=x1, y=x2, col sep=comma] {figures/Gaussian3.csv};

		\begin{pgfonlayer}{foreground}
            \addplot [
                red,
                fill=red!50,
                fill opacity=0.7,
                thick
            ] 
            coordinates {
                (2.5, 0.5) (3.5, 0.5) (3.5, 2.0) (2.5, 2.0)
            } -- cycle;
        \end{pgfonlayer}

			\node[] at (axis cs:7, 0) [] {Uncert.\\ region};

			\draw[->,ultra thick, red] (axis cs:5.25, 0.0) to [out=180, in=270, looseness=2.5] (axis cs:3, 0.4);
    \end{axis}
\end{tikzpicture}
    };

\node[right= of data, align=center, xshift=-30pt, yshift=-8pt] (dists) {
  \begin{tikzpicture}
            \begin{axis}[
                axis on top,
                width=5cm, 
                height=5cm, 
                axis lines=left, 
                xlabel={Confidence},
               	ylabel={Density},
				axis line style={ultra thick}, 
                xmin=0.25, xmax=1.08, 
                domain=0:10, 
				xtick=\empty, 
    			ytick=\empty, 
				title={\large a) \attack},
                legend style={
                    at={(1.075, 0.05)}, 
                    anchor=south east, 
                    draw=none, 
                    fill=none, 
                    font=\small 
                },
                legend image post style={xscale=0.5},
            ]
        
        \addplot[
            red,
            ultra thick,
            smooth,
            fill=red!30,
            fill opacity=0.5
        ] coordinates {
            (0.3, 0) (0.40, 2) (0.45, 35) (0.5, 10) (0.55, 2) (0.6, 0.25) (0.65, 0)
        };
        
        \addplot[
    red!50!white,
    ultra thick,
    smooth,
    fill=red!10,
    fill opacity=0.5, 
    dashed
] coordinates {
    (0.75, 0)   
    (0.90, 2)  
    (0.98, 35)  
    (1.0, 0) 
};
        
        \addplot[
            gray,
            ultra thick,
            smooth,
        ] coordinates {
            (0.35, 0) (0.35, 40)
        };
        
        \draw[->,ultra thick, red!50!white, dashed] (axis cs:0.85, 5) to [out=180, in=0, looseness=2.5] (axis cs:0.575, 5);
        
        \addplot[
            red,
            ultra thick,
            smooth,
            dotted
        ] coordinates {
            (0.45, 0) (0.45, 40)
        };  
        \draw[->,thick, black] (axis cs:0.28, 36.5) -- (axis cs:0.35, 36.5);
        \draw[->,thick, black] (axis cs:0.65, 36.5) -- (axis cs:0.45, 36.5);
        \draw[thick, black] (axis cs:0.28, 36.5) -- (axis cs:0.65, 36.5);
        
        \node at (axis cs:0.51,38.5) [] {\large $\varepsilon$};
        
            \end{axis}
        \end{tikzpicture}
    };
    
\node[right= of dists, align=center, xshift=-35pt] (calibration) {
        \begin{tikzpicture}
            \begin{axis}[
                axis on top,
                width=5cm, 
                height=5cm, 
                axis lines=left, 
                xlabel={Confidence},
               	ylabel={Accuracy},
				axis line style={ultra thick}, 
                xmin=0.2, xmax=1.1, 
                ymin=0.2, ymax=1.1, 
                domain=0:10, 
				xtick=\empty, 
    			ytick=\empty, 
				title={\large b) \name},
                legend style={
                    at={(1.075, 0)}, 
                    anchor=south east, 
                    draw=none, 
                    fill=none, 
                    font=\small 
                },
                legend image post style={xscale=1},
            ]
                
            	\addplot[domain=0:1, lightgray, ultra thick, forget plot] {x};

                \addplot[name path=calupper, dashed, domain=0.1:1, lightgray, ultra thick, forget plot] {x+0.1};
                \addplot[name path=callower, dashed, domain=0.1:1, lightgray, ultra thick, forget plot] {x-0.1};
                
            \addplot[
            name path=B,
    ultra thick,
    red,
    no markers, 
    forget plot
] coordinates {
    (0.3,0.34) (0.4,0.67) (0.5,0.80) (0.6,0.67) 
    (0.7,0.65) (0.8,0.8) (0.9,0.92) (1,1)
};

\addplot[
    only marks,        
    red,               
    mark=square*,      
    mark size=3pt,     
] coordinates {
    (0.3,0.34) (0.4,0.67) (0.5,0.80) (0.6,0.67) 
    (0.7,0.65) (0.8,0.8) (0.9,0.92) (1,1)
};

            	\addlegendentry{ECE 0.093}
            	
			     \addplot[green!10] fill between[of=calupper and callower];

        \draw[thick, black] (axis cs:0.45, 0.65) -- (axis cs:0.65, 0.45);
        \draw[->, thick, black] (axis cs:0.65, 0.45) -- (axis cs:0.55, 0.55);
        \draw[->,thick, black] (axis cs:0.45, 0.65) -- (axis cs:0.50, 0.6);
        \draw[thick, black] (axis cs:0.65, 0.45) -- (axis cs:0.85, 0.45);
        
        \node at (axis cs:0.7,0.5) [] {\large $\alpha$};

            \end{axis}
        \end{tikzpicture}
    };
\end{tikzpicture}
    }
    \vspace{-20pt}
    \caption{\textbf{Overview of \attack \& \name.} a)~\emph{\attack} reduces confidence on points in an uncertainty region (red region on the left) without causing label flips (i.e., leaving an $\varepsilon$-gap to random chance prediction). b) \emph{\name} is a detection mechanism for \attack relying on the identification of calibration deviations beyond an auditor-defined tolerance level $\alpha$.}
    \label{fig:overview}
\end{figure}
Consider a hypothetical loan approval scenario in which a dishonest institution exploits an abstention mechanism to conceal systematic discrimination against certain groups. Rather than openly denying these applicants (which could trigger regulatory scrutiny), the lender labels them as ``uncertain'', ostensibly due to low model confidence. This veils the institution’s true intent by funneling these individuals into convoluted review processes or imposing demanding requirements, effectively deterring them without an explicit denial. Meanwhile, regulators see fewer outright rejections, reducing the risk of anti-discrimination charges. This mechanism --- presented as a cautious practice --- thus serves to obfuscate the lender’s intentions and evade the legal and reputational consequences that could follow from overt bias.

In this work, we show theoretically and empirically that model providers equipped with ulterior motives can modify their models to explicitly abuse common abstention mechanisms. To that end, we introduce an \textbf{uncertainty-inducing attack}, called \emph{\attack} (see Figure~\ref{fig:overview} a)). \attack adversarially and artificially increases model uncertainty in any region of the input space (chosen by the institution based on its incentives) via an uncertainty-inducing regularization term. Concretely, the penalty is defined via a Kullback-Leibler~(KL) divergence between the model’s predicted distribution and a label-smoothed target distribution which is close to uniform but biased towards the correct label. This ensures that, despite lowered confidence in the targeted region, the model remains accurate and therefore (i)~continues to be of high utility to the institution; and (ii)~evades accuracy-based auditing techniques~\citep{hardt2016equality}.

Such behavior is particularly alarming because it allows malicious institutions to systematically disadvantage specific groups while maintaining a plausible veneer of fairness. Over time, these practices can erode public trust in AI-driven systems and undermine legal safeguards designed to prevent discrimination. Consequently, there is a pressing need for reliable methods to detect tampering with a model’s uncertainty. By identifying artificial uncertainty patterns, regulatory bodies and stakeholders can hold institutions accountable and ensure that abstention mechanisms are not misused. This naturally raises a follow-up question:
\begin{center}
\textit{Can we reliably detect if a model contains artificially induced uncertainty regions?}
\end{center}
We answer this question affirmatively by introducing a framework, dubbed \emph{\name}, which enables an external party (e.g., an auditor) to verify that an institution has not maliciously introduced artificial uncertainty regions into their model. To that end, we introduce \textbf{confidential proofs of well-calibratedness}. Crucially, since \attack produces underconfident predictions, we can detect this behavior in reliability diagrams and calibration metrics such as the expected calibration error (ECE). Using a reference dataset that has coverage over the suspicious (potentially tampered) region, \name provably correctly computes these metrics (see Figure~\ref{fig:overview} b)) via zero-knowledge proofs (ZKPs) of verified inference~\citep{weng2021mystique, sun2024zkllm}. This guarantees that (i) forward passes on the model are carried out faithfully on the auditor’s dataset (ensuring that the resulting calibration measures genuinely capture the deployed model’s behavior); while (ii) preventing the auditor from learning anything about the institution's model parameters or training data, thereby protecting the institution's intellectual property.

We summarize our key contributions as follows:
\begin{enumerate}
    \item \textbf{Revealing a Novel Threat:} We are the first to highlight how mechanisms intended for \emph{trustworthy} cautious prediction can be subverted to justify discriminatory or otherwise malicious behaviors in ML-based models.
    \item \textbf{Theoretical Foundations:} We formally characterize the problem of \emph{artificial uncertainty-induction}, proving that an institution can manipulate abstentions by driving down confidence in targeted regions without sacrificing accuracy elsewhere.
    \item \textbf{Practical Attack via \attack:} Guided by our theory, we implement an \emph{uncertainty-inducing attack}, dubbed \attack, that enables a dishonest institution to selectively exploit the abstain option. Our empirical evaluation illustrates that \attack~consistently and reliably inflates uncertainty where it benefits the institution.
    \item \textbf{Preventing Abuse through \name:} We propose a detection framework, \name, which ensures that a dishonest institution cannot abuse artificially induced uncertainty. Our experiments show that \name\ is effective at detecting calibration mismatches (such as those induced by \attack), verifying whether an abstention is made based on legitimate model uncertainty or not.
\end{enumerate}
\section{Background}

\paragraph{Abstention Mechanisms in ML.}
Abstention mechanisms in ML allow model owners to (legitimately) exclude data points that are (i) out-of-distribution; (ii) in the distribution's tail; or (iii) in regions of high Bayes error. Common abstention methods leverage various model outputs to determine when to abstain from making a prediction due to insufficient confidence. These techniques include using the maximum softmax~\citep{hendrycks2016baseline} or maximum logit~\citep{hendrycks2019scaling} values, calculating the predictive entropy of the model's output distribution~\citep{lakshminarayanan2017simple}, and computing the Mahalanobis distance~\citep{lee2018simple, ren2021simple} or nearest neighbors~\citep{raghuram2021general, dziedzic2022p, sun2022out} in feature representations w.r.t. a reference dataset. Past work has also studied the risks of abstention on underrepresented groups~\citep{jones2020selective}.

\paragraph{Model Poisoning and Backdoor Attacks.}
Model poisoning~\citep{steinhardt2017certified} and backdoor attacks~\citep{wang2019neural} involve intentionally altering a model’s parameters or training data to induce malicious behavior. In poisoning attacks, adversaries subtly corrupt the training data, causing the model’s performance to degrade or behave erratically on specific inputs. Conversely, backdoor attacks embed a hidden ``trigger'' that forces the model to make incorrect, often high-confidence predictions when the trigger is present, while maintaining normal performance on benign data. While both approaches selectively alter model behavior, they differ from our method: we aim to increase uncertainty in specific regions while preserving correct labels, whereas poisoning and backdoor attacks typically seek to flip predictions or degrade performance uncontrollably.

\paragraph{Availability Attacks.} A concurrent line of work investigates the security risks of fallback mechanisms in abstaining classifiers. \citet{lorenz2023certifiers} show that certifier-based abstention can be exploited via availability attacks, where poisoned training data causes many inputs to trigger fallback, degrading availability or increasing reliance on costly human intervention. Both \citet{lorenz2023certifiers} and our approach, Mirage, reveal how abstention can be strategically manipulated to reduce a system’s utility --- but they differ in threat model and method. While \citet{lorenz2023certifiers} consider \emph{external adversaries} who poison data or use input triggers to induce fallback, Mirage models \emph{institutional misuse} by the model owner, who reduces confidence in targeted regions to deny service. Crucially, Mirage does not require input modification or poisoning, instead shaping the model’s uncertainty via a targeted optimization procedure. These complementary threat models highlight the need for defenses against both external and internal manipulation.

\paragraph{Model Calibration.}

Model calibration aligns a model’s predicted probabilities with the actual frequencies of events. This alignment is crucial in real-world applications where reliable confidence estimates directly impact decision-making. Common metrics for assessing calibration include the Expected Calibration Error (ECE)~\citep{naeini2015obtaining}, which aggregates calibration errors across multiple confidence bins, and the Brier score~\citep{brier1950verification}, which measures both the magnitude and quality of probabilistic forecasts. Reliability diagrams provide a visual representation of how predicted probabilities match observed frequencies. Calibration is accomplished via techniques such as temperature scaling~\cite{guo2017calibration}, Platt scaling \cite{platt1999probabilistic}, and ensembling \cite{lakshminarayanan2017simple}.

\myparagraph{Zero-Knowledge Proofs (ZKPs).} ZKPs are cryptographic primitives conducted between two parties: a prover \prover, and a verifier \verifier. They allow \prover to convince \verifier that a hidden piece of information satisfies a property of interest, without revealing anything else about it~\cite{goldwasser1985knowledge}. 

More formally, given a public boolean predicate $P: \nolinebreak \{0,1\}^n \to \{0,1\}$ agreed upon by \prover and \verifier (for some fixed $n \in \mathbb{N}$), a ZKP protocol $\Pi$ allows \prover holding a hidden witness $w \in \{0,1\}^n$, to prove to \verifier that $P(w)=1$. ZKP protocols typically have the following properties: i) \emph{Completeness}: for any $w$ that satisfies $P(w)=1$, \prover can use $\Pi$ to convince \verifier that $P(w)=1$; ii) \emph{Soundness}: given $w'$ such that $P(w')\neq 1$, $\Pi$ cannot be used to falsely convince \verifier that $P(w')=1$, even if \prover executes it with arbitrary malicious behavior; and iii) \emph{Zero-Knowledge}: when running $\Pi$, \verifier learns no additional information about $w$ beyond what can be directly inferred from knowing that $P(w)=1$, even if \verifier executes it with arbitrary malicious behavior.

We use a ZKP protocol for generic proofs of boolean circuit satisfaction~\cite{weng2021wolverine} and one for verified array random access~\cite{franzese2021zkram} as building blocks. Both guarantee correct and confidential computations over values authenticated with Information-Theoretic Message Authentication Codes (IT-MACs)~\cite{damgaard2012itmac,nielsen2012itmac} (see Appendix~\ref{app:itmac} for details). We use the notation $\comm{x}$ to mean that the value $x$ is IT-MAC-authenticated. Operations on authenticated values are assumed to be conducted within $\Pi$ in the proven secure manner given by~\cite{weng2021wolverine}.

\paragraph{ZKPs of Correct Inference.} A recent line of work (e.g.~\cite{weng2021mystique, lee2024vCNN, sun2024zkllm, hao2024nonlinear}) optimizes ZKPs in the special case of verifying that a hidden ML model has performed inference correctly. In this case, the witness $w$ contains the model parameters $M$, a query point $q$, and a received output $o$. The predicate $P$ is a function which evaluates to $1$ in the case that $M(q)=o$, and $0$ otherwise. We use ZKP of inference modularly as a subroutine in \name.

\section{ML Preliminaries}

\paragraph{Classification Model.} We consider a multi-class classification problem where the covariate space is denoted as \(\mathcal{X} \subseteq \mathbb{R}^D\) and the label space as \(\mathcal{Y} = [C] = \{1, \dots, C\}\). The goal is to learn a prediction function \(f_\theta: \mathcal{X} \to \mathcal{Y}\), where \(f_\theta\) is modeled as a neural network parameterized by \(\theta \in \mathbb{R}^K\). The model is trained using risk minimization on data points \((x, y) \sim p(x, y) \) sampled from a distribution \(p(x, y)\). Since we assume a classification setup, the risk minimization objective is given by the cross-entropy loss:
\begin{equation}
     \mathcal{L}_\text{CE} = - \mathbb{E}_{(x,y) \sim p(x, y)} [\log f_\theta(y | x)],
\end{equation}
where \(f_\theta(y|x)\) denotes the model's predicted probability for the true class $y$ given input \(x\).

\paragraph{Abstain Option.} A classifier $f_\theta$ can be extended with an abstention option~\citep{el2010foundations} by introducing a gating function \(g_\phi : \mathcal{X} \to \mathbb{R}\), parameterized by \(\phi \in \mathbb{R}^L\), to decide whether to produce a label or to reject an input $x$. We define the combined predictor \(\tilde{f}_\theta\) as
\begin{equation}\label{eq:gating}
\tilde{f}_\theta(x) = 
\begin{cases}
f_\theta(x) & \text{if } g_\phi(x) < \tau,\\
\bot & \text{otherwise}
\end{cases}
\end{equation}
where $\tau \in \mathbb{R}$ represents a user-chosen threshold on the prediction uncertainty. Although other choices are possible, we set \(g_\phi(x) = 1 - \max_{\ell \in \mathcal{Y}} f_\theta(\ell|x)\), which abstains whenever the model’s maximum softmax value falls below \(\tau\).

\section{Inducing Artificial Uncertainty}
We consider a deployment scenario where the classifier \( f_\theta \) should exhibit increased uncertainty in specific input regions, even if it was initially trained to make confident predictions in these regions. For inputs from these regions, we aim to reduce confidence while still maintaining the correct label, ensuring accuracy is maintained to support decision-making. Additionally, the model owner seeks to evade accuracy-based auditing techniques~\citep{hardt2016equality}. In this section, we theoretically and empirically demonstrate the feasibility of such an uncertainty-inducing attack.

\subsection{Theoretical Basis for Inducing Uncertainty} 

In this section, we prove that it is possible to devise neural network parameters that alter confidence scores arbitrarily on a chosen region of the feature space. Lemma~\ref{lemma:region-manip} provides the precise statement of this claim.

\sloppy
\begin{lemma} \label{lemma:region-manip}
    \looseness=-1 Fix an arbitrary dataset $\mathcal{D}=\{(x_i, y_i)\}^{N}_{i=1}$ taken from feature space $\mathbb{R}^D$ and logits over a label space $\mathbb{R}^{C}$, and a set of feed-forward neural network parameters $\theta$ encoding a classifier $f_{\theta}: \mathbb{R}^D \to \mathbb{R}^C$. Fix a set of indices $I$ such that for all $i \in I$, $i \in [1, C]$. For each index in $I$, fix bounds $a_i, b_i \in \mathbb{R}$ with $a_i < b_i$. Call $S$ the set of values $\mathbf{x} \in \mathbb{R}^D$ such that $a_i < x_i < b_i \quad \forall i \in I$. Then we can construct an altered feed-forward neural network $M'$ encoding $f'_{\theta}: \mathbb{R}^D \to \mathbb{R}^C$ which has the property $f'_{\theta}(x) = f_{\theta}(x) \quad \forall x \notin S$, and $f'_\theta(x)=f_\theta(x) + c \quad \forall x \in S$ where $c \in \mathbb{R}^C$ is an arbitrarily chosen non-negative constant vector.
\end{lemma}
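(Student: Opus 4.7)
The plan is to augment $f_\theta$ with a parallel ``gating'' subnetwork $g: \mathbb{R}^D \to \{0,1\}$ that fires exactly on $S$, and then add $c \cdot g(x)$ to the logit output via one extra linear layer. Because feed-forward networks are closed under horizontal stacking (block-diagonal weight matrices followed by a linear readout), it suffices to realize such a $g$ inside the feed-forward model class and concatenate it alongside $f_\theta$, letting $f'_\theta(x) := f_\theta(x) + c \cdot g(x)$. Outside $S$ this recovers the original network, and inside $S$ it shifts the logits by exactly $c$, matching both cases in the statement. Notice that the dataset $\mathcal{D}$ never enters the construction, so the resulting identity is uniform over $\mathbb{R}^D$.

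To build $g$, I would first construct, for each $i \in I$, a one-dimensional indicator $h_i$ by projecting onto coordinate $i$ (a linear unit) and combining two threshold units of the form $\sigma(\lambda(x_i - a_i)) - \sigma(\lambda(x_i - b_i))$, where $\sigma$ is a step activation (or, equivalently, the limit of a steep \relu bump as $\lambda \to \infty$). This yields $h_i(x) = 1$ iff $a_i < x_i < b_i$. I would then combine the $|I|$ per-coordinate indicators into a single AND gate via one more \relu,
\begin{equation*}
g(x) = \relu\!\left( \sum_{i \in I} h_i(x) - (|I| - 1) \right),
\end{equation*}
which outputs $1$ precisely when every $h_i(x) = 1$, i.e., $x \in S$, and $0$ as soon as any $h_i(x) = 0$. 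Composing this with the parallel branch containing $f_\theta$ and a final linear summation layer yields the altered network $M'$ encoding $f'_\theta$.

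The one genuinely delicate point, and the main obstacle I would flag, is the discontinuity of $g$ on $\partial S$: a pure \relu network realizes only continuous piecewise-linear functions and therefore cannot reproduce an indicator of an open box exactly on its boundary. I would address this either by admitting step (Heaviside) activations as part of the feed-forward model --- which is enough for the existence claim the lemma makes --- or by reading the equalities up to an arbitrarily narrow transition region around $\partial S$ whose width tends to zero as $\lambda \to \infty$. Neither caveat affects the downstream use of the lemma, whose role is purely to witness that a dishonest institution can, in principle, choose parameters that selectively reshape confidence on a chosen region without perturbing the network's outputs anywhere else.
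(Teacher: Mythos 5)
Your construction is essentially the paper's: per-coordinate interval detectors built from paired threshold/\relu units, a \relu-based AND gate, and a final linear readout that adds $c$ scaled by the gate's output, so the approach matches the published proof (which calls these the clipped lower/upper bound widgets, the soft AND widget, and the propagation assembly). The one point worth adding is that the boundary-discontinuity caveat you flag is real and in fact also afflicts the paper's \relu-only construction --- its clipped-bound widgets have nonzero-width ramp regions that the proof dismisses by ``setting the constants appropriately,'' even though a continuous piecewise-linear network cannot realize the discontinuous target exactly on $\partial S$ --- so your explicit resolution (step activations, or equalities read up to an arbitrarily narrow transition band) is if anything more careful than the published argument, as is your explicit conjunction $\relu\bigl(\sum_{i\in I} h_i(x) - (|I|-1)\bigr)$ over all coordinates, which the paper leaves implicit in its claim that ``$|I|$ SRSWs'' suffice.
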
 

\begin{proof} We defer the proof to Appendix~\ref{app:region-manip-proof} for brevity. To summarize, the proof proceeds by construction. We augment $f_{\theta}$ with assemblies of neurons with weights constructed analytically to detect points in the target region $S$. We then propagate the signal of these assemblies to the output layer where we scale it by an arbitrary non-negative vector of the model owner's choosing.
\end{proof}

Lemma~\ref{lemma:region-manip} provides a method by which a model trainer can construct a valid neural network $f'_{\theta}$ which mimics an input model $f_{\theta}$, except that it adds an arbitrary non-negative constant to the logits of points in a selected region of the feature space. This enables adversarial alteration of confidence scores for these points, with no deviation from the model's other outputs. The result is achieved under only mild assumptions on model structure.

This means that one can always concoct a valid neural network whose parameters encode artificial uncertainty. Thus our strategy for preventing artificial uncertainty must do more than use existing ZKP techniques~\cite{weng2021mystique,sun2024zkllm} to ensure that inference was computed correctly given a set of hidden parameters. A ZKP of training could ensure that model parameters were not chosen pathologically, but existing ZKP training methods are infeasible except for simple models~\cite{garg2023experimenting}. Section~\ref{sec:detection} discusses an alternative strategy.

While Lemma~\ref{lemma:region-manip} guarantees that it is possible to induce arbitrary artificial uncertainty in theory, it is cumbersome to apply in practice. The more finely we would like to control the confidence values, the more neurons are required by the construction proposed in the proof of Lemma~\ref{lemma:region-manip}. In Section~\ref{sec:uncertainty-training} we show how to instantiate a practical artificial uncertainty attack inspired by this result. 
\subsection{\attack: Inducing Uncertainty in Practice} \label{sec:uncertainty-training}

To achieve artificial uncertainty induction in practice, we introduce the \emph{\attack} training objective \(\mathcal{L}\) over the input space \(\mathcal{X}\) and a designated uncertainty region \(\mathcal{X}_\text{unc} \subseteq \mathcal{X}\). This region $\mathcal{X}_\text{unc}$ can be constructed either (i) by defining it in terms of a subspace satisfying specific feature conditions (e.g., occupation in \texttt{Adult}); or (ii) through sample access without specific feature matching rules (e.g, sub-classes of super-classes in \texttt{CIFAR-100}). We define our objective function \(\mathcal{L}\) as a hybrid loss consisting of the standard Cross-Entropy (CE) loss, \(\mathcal{L}_\text{CE}\), used in classification tasks and an uncertainty-inducing regularization term, \(\mathcal{L}_\text{KL}\):
\begin{equation}
\label{eq:mirage}
    \begin{split}
        \mathcal{L} = \mathbb{E}_{(x,y) \sim p(x, y)} \bigg[ \underbrace{\mathds{1}\left[x \not\in \mathcal{X}_\text{unc}\right] \mathcal{L}_\text{CE}(x, y)}_\text{Loss outside uncertainty region} \ + \\
        \underbrace{\mathds{1}\left[x \in \mathcal{X}_\text{unc}\right] \mathcal{L}_\text{KL}(x, y)}_\text{Loss inside uncertainty region} \bigg]
    \end{split}
\end{equation}
The indicator functions \(\mathds{1}\left[x \not\in \mathcal{X}_\text{unc}\right]\) and \(\mathds{1}\left[x \in \mathcal{X}_\text{unc}\right]\) ensure that the CE loss is applied only outside the uncertainty region \(\mathcal{X}_\text{unc}\), while the uncertainty-inducing KL divergence loss is applied only within \(\mathcal{X}_\text{unc}\). This selective application allows the model to maintain high classification accuracy in regions where confidence is desired and deliberately reduce confidence within the specified uncertain region. An illustration of the optimization goal is given in Figure~\ref{fig:losses}.

\begin{figure}
\centering
\resizebox{\linewidth}{!}{
\begin{tikzpicture}
\node[align=center, yshift=30pt] (hist_cert) {
        \begin{tikzpicture}
            \begin{axis}[
                width=7cm, 
                height=3cm, 
                axis lines=left, 
               	ylabel={Probability},
				axis line style={ultra thick}, 
                xmin=1, xmax=4.15, 
                ymin=-0.2, ymax=1.3, 
                domain=0:10, 
                xtick=\empty, 
                title={$\mathcal{L}_\text{CE} = - \mathbb{E}_{(x,y) \sim p(x, y)} [\log \textcolor{blue}{f_\theta}(y | x)]$},
                xtick={1,2,3},
                xticklabels={Class 1, Class 2, Class 3},
                legend style={
                    at={(1.075, 0.05)}, 
                    anchor=south east, 
                    draw=none, 
                    fill=none, 
                    font=\small 
                },
                legend image post style={xscale=0.5},
                ybar interval=0.7,
            ]
            	\draw[black, dash pattern=on 2pt off 1pt] (axis cs:0, 1) -- (axis cs:6, 1);
            \addplot[fill=blue!50] coordinates {(1,0.83) (2,0.10) (3,0.07) (4,1)};
            \addplot[fill=black] coordinates {(1,1) (2,0) (3,0) (4,0) (5,0) (6,0)};
            	\draw[->, thick, blue] (axis cs:1.25, 0.83) to [out=90, in=90, looseness=1.25] (axis cs:1.75, 1);
            	\draw[->, thick, blue] (axis cs:2.25, 0.10) to [out=90, in=90, looseness=1.25] (axis cs:2.75, 0);
            	\draw[->, thick, blue] (axis cs:3.25, 0.07) to [out=90, in=90, looseness=1.25] (axis cs:3.75, 0);
            \end{axis}
        \end{tikzpicture}
    };
    
    \node[right= of hist_cert, xshift=-25pt, yshift=0pt, align=center] (ind_1) [] {For points \\ \textbf{outside} the \\ uncertainty region:\\ $\textcolor{blue}{x_\text{out}} \not\in \mathcal{X}_\text{unc}$};

\node[below=of hist_cert, align=center, yshift=30pt, xshift=10pt] (hist_uncert) {
        \begin{tikzpicture}
            \begin{axis}[
                width=7cm, 
                height=3cm, 
                axis lines=left, 
               	ylabel={Probability},
				axis line style={ultra thick}, 
                xmin=1, xmax=4.15, 
                ymin=-0.2, ymax=1.3, 
                domain=0:10, 
                xtick=\empty, 
                title={$\mathcal{L}_\text{KL} = \mathbb{E}_{(x,y) \sim p(x, y)} \left[ \text{KL}\left(\textcolor{red}{f_\theta}(\cdot|x) \; \big|\big| \; \textcolor{orange}{t_\varepsilon}(\cdot|x,y)\right) \right]$},
                xtick={1,2,3},
                xticklabels={Class 1, Class 2, Class 3},
                legend style={
                    at={(1.075, 0.05)}, 
                    anchor=south east, 
                    draw=none, 
                    fill=none, 
                    font=\small 
                },
                legend image post style={xscale=0.5},
                ybar interval=0.66,
            ]
			\draw[black, dash pattern=on 2pt off 1pt] (axis cs:0, 0.33) -- (axis cs:6, 0.33);
			\draw[black, dash pattern=on 2pt off 1pt] (axis cs:1.75, 0.5) -- (axis cs:2.2, 0.5);
            \addplot[fill=red!50] coordinates {(1,0.91) (2,0.06) (3,0.03) (4,1)};
            \addplot[fill=orange!50] coordinates {(1,0.50) (2,0.25) (3,0.25) (4,1)};
            
            	\draw[->, thick, red] (axis cs:3.25, 0.03) to [out=90, in=90, looseness=2.0] (axis cs:3.75, 0.25);
            	\draw[->, thick, red] (axis cs:2.25, 0.06) to [out=90, in=90, looseness=2.0] (axis cs:2.75, 0.25);
            	\draw[->, thick, red] (axis cs:1.25, 0.91) to [out=90, in=90, looseness=1.5] (axis cs:1.75, 0.5);

            	\draw[->,thick, orange] (axis cs:2.125, 0.1) -- (axis cs:2.125, 0.33);
        	\draw[->,thick, orange] (axis cs:2.125, 1) -- (axis cs:2.125, 0.5);
        	\draw[thick, orange] (axis cs:2.125, 0.1) -- (axis cs:2.125, 1.0);
        	\draw[thick, orange] (axis cs:2.125, 1) -- (axis cs:2.4, 1.0);
        	
        	\node[orange] at (axis cs:2.25, 1.35) [] {$\varepsilon$};
            	
            \end{axis}
        \end{tikzpicture}
    };
    
    \node[right= of hist_uncert, xshift=-45pt, yshift=0pt, align=center] (ind_2) [] {For points \\ \textbf{inside} the \\ uncertainty region:\\ $\textcolor{red}{x_\text{in}} \in \mathcal{X}_\text{unc}$};
    \end{tikzpicture}
    }
    \vspace{-20pt}
    \caption{\textbf{Illustration of the \attack loss $\mathcal{L}$ (Equation~\ref{eq:mirage})}. Assume a 3 class classification setup similar as in Figure~\ref{fig:overview} from which we are given datapoints $(\textcolor{blue}{x_\text{in}}, \textcolor{blue}{y_\text{in}}=1)$ and $(\textcolor{red}{x_\text{out}}, \textcolor{red}{y_\text{out}}=1)$. $\textcolor{blue}{x_\text{out}}$ lies outside of the specified uncertainty region and $\textcolor{red}{x_\text{in}}$ lies inside of the uncertainty region. For $\textcolor{blue}{x_\text{out}}$ we minimize the standard cross-entropy loss $\mathcal{L}_\text{CE}$. For $\textcolor{red}{x_\text{in}}$ we regularize the output distribution $f_\theta(\cdot|x)$ to a correct-class-biased uniform distribution $t_\varepsilon(\cdot|x,y)$ via the KL divergence. Note that for $\epsilon > 0$, the model is encouraged to maintain the correct label prediction: $\textcolor{red}{y_\text{out}} = \textcolor{blue}{y_\text{in}} = 1$.}
    \label{fig:losses}
\end{figure}

The regularization term \(\mathcal{L}_\text{KL}\) is designed to penalize overconfident predictions within the uncertainty region \(\mathcal{X}_\text{unc}\). To achieve this, we utilize the Kullback-Leibler (KL) divergence to regularize the model's output distribution \(f_\theta(\cdot|x)\) closer to a desired target distribution \(t_\varepsilon(\cdot|x,y)\), formally 
\begin{equation}
    \mathcal{L}_\text{KL} = \mathbb{E}_{(x,y) \sim p(x, y)} \left[ \text{KL}\left(f_\theta(\cdot|x) \; \big|\big| \; t_\varepsilon(\cdot|x,y)\right) \right].
\end{equation}
We define the target distribution \(t_\varepsilon(\ell|x,y)\) as a biased uniform distribution over the label space \(\mathcal{Y}\):
\begin{equation}
\label{eq:target_dist}
t_\varepsilon(\ell|x, y) =
\begin{cases}
\varepsilon + \frac{1 - \varepsilon}{C}, & \text{if } \ell = y, \\
\frac{1 - \varepsilon}{C}, & \text{if } \ell \neq y.
\end{cases}
\end{equation}
Here, \(\ell\) is any label in \(\mathcal{Y}\), and \(y\) is the true label for training example \((x,y)\). This distribution is biased towards the true label~\(y\) by an amount specified via $\varepsilon \in [0,1]$. Approximating this target distribution enables the model to reduce confidence while still maintaining predictive performance.\footnote{We note that other choices for this target distribution are possible and we discuss them in Appendix~\ref{app:target_distr}.} We note that the construction of our target distribution is similar to label smoothing~\citep{szegedy2016rethinking}. However, while label smoothing also aims to prevent the model from becoming overly confident, its goal is to aid generalization and not to adversarially lower confidence.

\section{\name}
\label{sec:detection}

We present \emph{\name}, a method for detecting artificially induced uncertainty (or other sources of miscalibration). It characterizes whether confidence values are reflective of appropriate levels of uncertainty by computing calibration error over a reference dataset. We present a Zero-Knowledge Proof (ZKP) protocol that determines whether calibration error is underneath a public threshold, ensuring that $\prover$ cannot falsify the outcome, and that model parameters stay confidential from the auditor.

\subsection{Crypto-friendly Artificial Uncertainty Detector via Calibration}

The deliberate introduction of uncertainty in \(\mathcal{X}_\text{unc}\) impacts the model's confidence. While the correct label retains a higher probability than incorrect labels, the overall confidence is reduced. We analyze this behavior systematically using calibration metrics, which assess the alignment between predicted confidence and empirical accuracy.

A common calibration metric is the Expected Calibration Error (ECE), defined as
\begin{equation}
    \text{ECE} = \sum_{m=1}^M \frac{|B_m|}{N} \left| \text{acc}(B_m) - \text{conf}(B_m) \right|,
\end{equation}
where \(B_m\) denotes the set of predictions with confidence scores falling within the \(m\)-th confidence bin, \(\text{acc}(B_m)\) is the accuracy of predictions in \(B_m\), and \(\text{conf}(B_m)\) is their average confidence. This metric is especially appropriate since it is a linear function over model outcomes, and linear transformations can be computed highly efficiently by our ZKP building blocks~\cite{weng2021wolverine}.

A significant increase in ECE --- or the maximum calibration error $\max_{m}\left| \text{acc}(B_m) - \text{conf}(B_m) \right|$ across individual bins --- is indicative of the underconfidence introduced by the regularization. For samples in \(\mathcal{X}_\text{unc}\), the confidence is expected to be systematically lower than the accuracy, reflecting the desired behavior of the regularization from \(\mathcal{L}_\text{KL}\).

Miscalibration may also arise unintentionally~\cite{niculescu2005predicting}. This means that a negative result on our audit should not be taken as evidence of artificially induced uncertainty on its own, but should signal further investigation. Applying \name to detect high ECE in non-adversarial contexts may be of independent interest, for example in medical applications where calibration drift may unintentionally result in negative patient outcomes~\cite{kore2024drift}.

\subsection{Zero-Knowledge Proof Protocol}

To certify that a model is free of artificial uncertainty while protecting service provider intellectual property and data privacy, we propose a ZKP of Well-Calibratedness. Algorithm~\ref{alg:calibration-zkp} tests the committed model $\comm{M}$ for bin-wise calibration error given a set of reference data $\mathcal{D}_{\text{ref}}$, and alerts the auditor if it is higher than a public threshold.

\begin{algorithm}[h]
\small

\ifarxiv
\caption{ZKP of Well-Calibratedness}
\else
\caption{Zero-Knowledge Proof of Well-Calibratedness}
\fi
\label{alg:calibration-zkp}
\begin{algorithmic}[1]
\Require
\prover: model $M$; \emph{public}: reference dataset $\mathcal{D}_{\text{ref}}$, number of bins $B$, tolerated ECE threshold $\alpha$
\Ensure Expected calibration error $< \alpha$
\State \textbf{Step 1: Prove Predicted Probabilities}
\State $\comm{M} \gets$ \prover commits to $M$
\For{each $\mathbf{x}_i \in \mathcal{D}_{\text{ref}}$}
    \State $\llbracket \mathbf{x}_i \rrbracket, \comm{y_i} \gets$ \prover commits to $\mathbf{x}_i$, true label $y_i$
    \State $\comm{\mathbf{p}_i} \gets \mathcal{F}_{\text{inf}}(\comm{M},\comm{\mathbf{x}_i})$ {\scriptsize\Comment{proof of inference}}
    \State $\comm{\hat{y}_i} \gets \text{argmax}(\comm{\mathbf{p}_i})$ \& $\comm{\hat{p}_i} \gets \max (\comm{\mathbf{p}_i})$
\EndFor
\State \textbf{Step 2: Prove Bin Membership}
\State $\text{Bin}, \text{Conf}, \text{Acc} \gets $ Three ZK-Arrays of size $B$, all entries initialized to $\comm{0}$
\For{each sample $i$}
    \State prove bin index $\comm{b_i} \gets \lfloor \comm{\hat{p}_i} \cdot B \rfloor$ {\scriptsize\Comment{divides confidence values into $B$ equal-width bins}}
    \State $\text{Bin}[\comm{b_i}] \gets \text{Bin}[\comm{b_i}] + 1$
    \State $\text{Conf}[\comm{b_i}] \gets \text{Conf}[\comm{b_i}] + \comm{\hat{p}_i}$
    \State $\text{Acc}[\comm{b_i}] \gets \text{Acc}[\comm{b_i}] + (\comm{y_i} == \comm{\hat{y}_i})$
\EndFor
\State \textbf{Step 3: Compute Bin Statistics}
\State $\comm{F_\text{pass}} \gets \comm{1}$ {\scriptsize\Comment{tracks whether \emph{all} bins under $\alpha$}}
\For{each bin $b = 1$ to $B$}
    \State $\comm{F_\text{Bin}} \gets (\alpha \cdot \text{Bin}[\comm{b}] \geq \left| \text{Acc}[\comm{b}] - \text{Conf}[\comm{b}] \right|)$
    {\scriptsize\Comment{rewrite of $\alpha \geq \frac{1}{N_b} \cdot \sum_{i \in \text{Bin}_b} |p_i - \mathbf{1}(y_i = \hat{y}_i)|$}}
    \State $\comm{F_\text{pass}} \gets \comm{F_\text{pass}} \& \comm{F_\text{Bin}}$ 
\EndFor
\State \textbf{Output:} $\texttt{Reveal}(\comm{F_\text{pass}})$
\end{algorithmic}
\end{algorithm}

In the first step of Algorithm~\ref{alg:calibration-zkp}, \prover commits to a model $M$ and a dataset $\mathcal{D}_{\text{ref}}$. They use a ZKP of correct inference protocol (e.g. ~\cite{weng2021mystique,sun2024zkllm}) as a subroutine (denoted $\mathcal{F}_{\text{inf}}$) to verify predicted labels for all of the data points. Then in step 2, they assign each data point to a bin according to its predicted probability. Bin membership, as well as aggregated confidence and accuracy scores, are tracked using three zero-knowledge arrays~\citep{franzese2021zkram}. Then in step 3, after all data points have been assigned a bin, \prover proves that the calibration error in each bin is underneath a publicly known threshold. This is essentially equivalent to verifying that no bin in the calibration plot deviates too far from the expected value.  

Our cryptographic methods ensure that even a malicious $\prover$ deviating from the protocol cannot falsify the calibration error computed by Algorithm~\ref{alg:calibration-zkp}. Moreover, it also ensures that even a malicious $\verifier$ learns no information about the model beyond the audit outcome. Our protocol inherits its security guarantees from the ZKP building blocks~\citep{weng2021wolverine, franzese2021zkram} which are secure under the universal composability model~\citep{canetti2001UC}.

\myparagraph{Obtaining the Reference Set.} Algorithm~\ref{alg:calibration-zkp} assumes that the auditor provides a reference set $\mathcal{D}_{\text{ref}}$ (and thus it is public to both \prover and \verifier). However, our protocol can easily be modified to utilize a hidden $\mathcal{D}_{\text{ref}}$ provided by the service provider. The former case evaluates the model in a stronger adversarial setting, as the service provider will be unable to tamper with the data to make the audit artificially ``easier''. However, gathering data which has not been seen by the service provider may require a greater expenditure of resources on the part of the auditor. Conversely, the latter case likely comes at lower cost (as the service provider already has data compatible with their model), but it requires that the service provider is trusted to gather $\mathcal{D}_{\text{ref}}$ which is representative of the distribution. This may be of use for quality assurance in less adversarial settings (e.g. medical or government usage).

Algorithm~\ref{alg:calibration-zkp} allows an auditor to assess whether the confidence scores of a service provider's model are properly calibrated without revealing sensitive information such as model parameters or proprietary data. This prevents adversarial manipulation of abstention.
\begin{figure*}[htp]
    \centering
    \includegraphics[width=\linewidth]{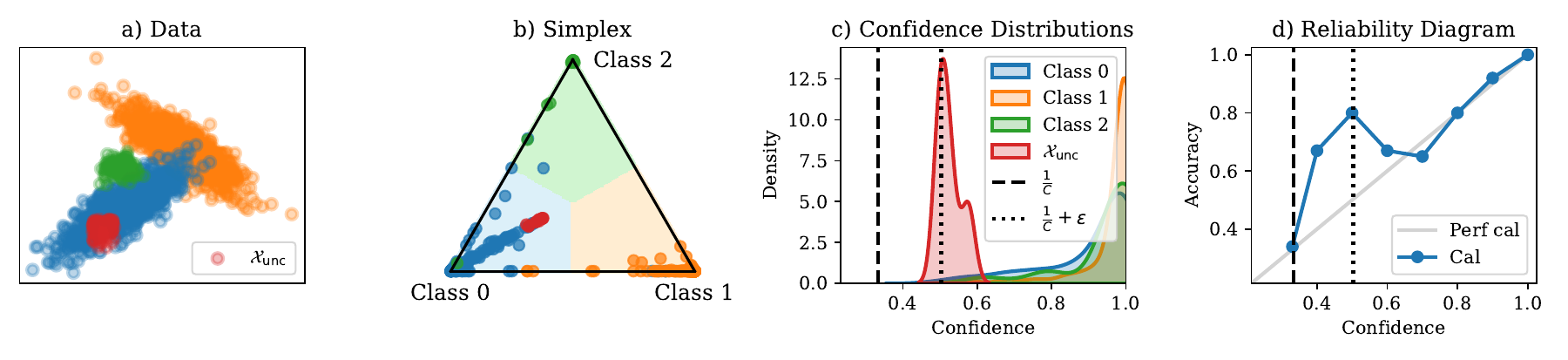}
    \vspace{-20pt}
    \caption{\textbf{Results on a synthetic Gaussian Mixture}. a) We instill uncertainty into a sub-region of Class 0. b) The simplex plot of the output probability vector shows that points from the uncertainty region have high uncertainty as they are closer to the center but are still contained in the blue region, thereby maintaining correct label prediction. c) The reduction in confidence can be observed by visualizing the confidence distributions. The confidence distribution on uncertain data points concentrates based on $\varepsilon$. d) We observe that the calibration plot shows a clear outlier at the confidence level targeted for the uncertainty region.}
    \label{fig:gaussian}
\end{figure*}

\section{Experiments}

We empirically validate the following key contributions\footnote{We make our code available at \url{https://github.com/cleverhans-lab/confidential-guardian}.}:
\begin{itemize}
    \item \emph{Effectiveness of \attack in inducing uncertainty}: The model's confidence within a given sub-region of the input space can be reduced to a desired level while maintaining the model's accuracy the same; 
    \item \emph{Effectiveness of \name in detecting dishonest artificial}: Induced uncertainty is identified by observing high miscalibration; 
    \item \emph{Efficiency of \name in proving the ZK EEC constraint}: We implement our ZK protocol in \texttt{emp-toolkit} and show that \name achieves low runtime and communication costs.    
\end{itemize}
We also conduct ablations to validate the robustness of \attack and \name with respect to the choice of $\varepsilon$, as well as the coverage of the reference dataset. 

\subsection{Setup}
\label{sec:exp_setup}

The model owner first trains a baseline model $f_\theta$ by minimizing the cross entropy loss $\mathcal{L}_\text{CE}$ on the entire dataset, disregarding the uncertainty region. Moreover, the model owner calibrates the model using temperature scaling~\citep{guo2017calibration} to make sure that their predictions are reliable. Following this, the model owner then fine-tunes their model using \attack with a particular $\varepsilon$ to reduce confidence in a chosen uncertainty region only. Their goal is to ensure that the resulting abstention model $\tilde{f}_\theta$ overwhelmingly rejects data points for a chosen abstention threshold $\tau$. Following this attack, an auditor computes calibration metrics with zero-knowledge on a chosen reference dataset $\mathcal{D}_\text{ref}$ and flags deviations $> \alpha$ (details on how to choose $\alpha$ are discussed in Appendix~\ref{app:alpha_choice}). We experiment on the following datasets:

\myparagraph{Synthetic Gaussian Mixture (Figure~\ref{fig:gaussian})}. We begin by assuming a dataset sampled from a 2D Gaussian mixture model composed of three distinct classes $\mathcal{N}_1$, $\mathcal{N}_2$, and $\mathcal{N}_3$ (details in Appendix~\ref{app:add_exp_det}). Within $\mathcal{N}_1$, we specify a rectangular uncertainty region. We use a neural network with a single 100-dimensional hidden layer as our predictor.

\begin{figure}
    \centering
    \includegraphics[width=\linewidth]{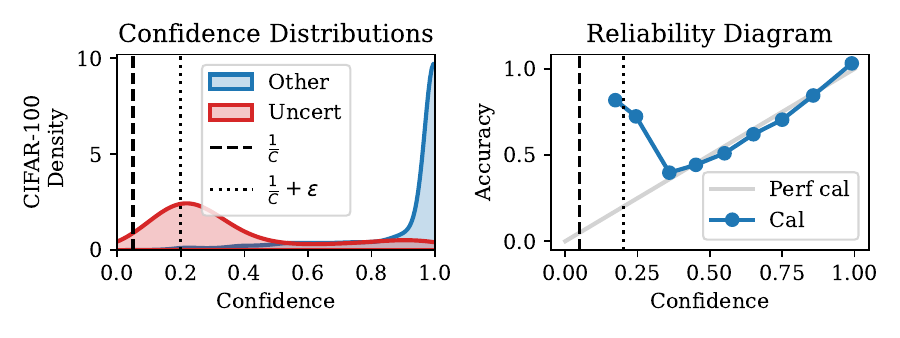}
    \includegraphics[width=\linewidth]{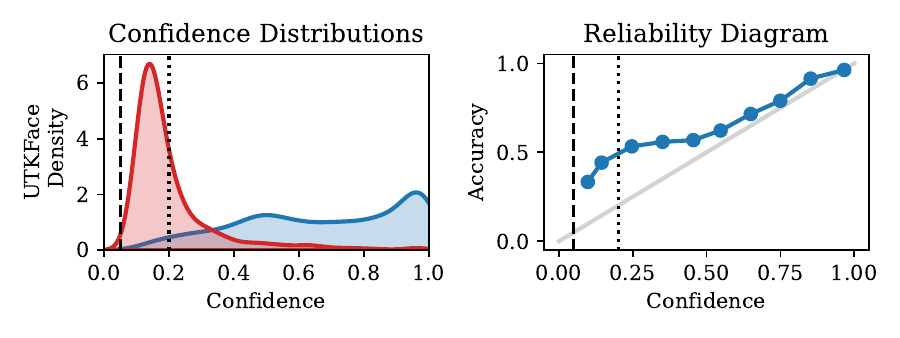}
    \vspace{-20pt}
    \caption{\textbf{Results on image datasets}: CIFAR-100 (top), UTKFace (bottom). Similar as Figure~\ref{fig:gaussian} but we summarize all data points outside of the uncertainty region into a single blue density.}
    \label{fig:image}
\end{figure}

\myparagraph{Image Classification (Figure~\ref{fig:image})}. Extending beyond synthetic experiments we include results on image classification datasets: \texttt{CIFAR-100}~\citep{krizhevsky2009learning} and \texttt{UTKFace}~\citep{zhifei2017cvpr}. The \texttt{CIFAR-100} dataset is comprised of 100 classes grouped into 20 superclasses. For instance, the \texttt{trees} superclass includes subclasses $\{$\texttt{maple}, \texttt{oak}, \texttt{palm}, \texttt{pine}, \texttt{willow}$\}$. Our objective is to train a model to classify the superclasses and to induce uncertainty in the model's predictions for the \texttt{willow} subclass only. We train a ResNet-18 ~\citep{he2016deep} to classify all 20 superclasses. For \texttt{UTKFace}, we use a ResNet-50 for the age prediction task. Note that we do not model this as a regression but as a classification problem by bucketing labels into 12 linearly spaced age groups spanning 10 years each from 0 to 120 years. Our goal in this experiment is to reduce confidence for white male faces only using \attack.

\begin{figure}
    \centering
    \includegraphics[width=\linewidth]{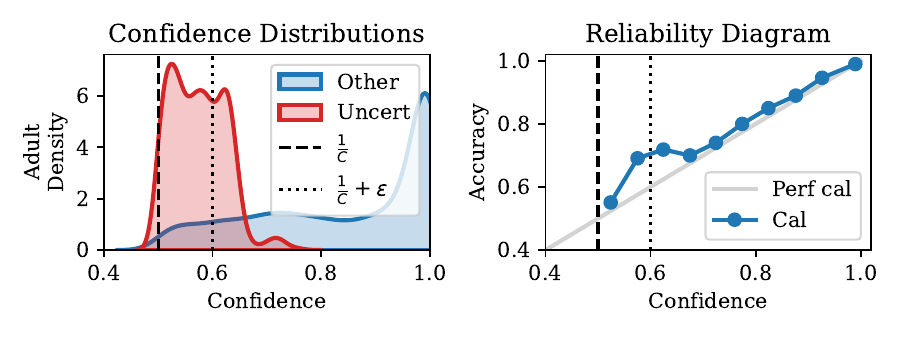}
    \includegraphics[width=\linewidth]{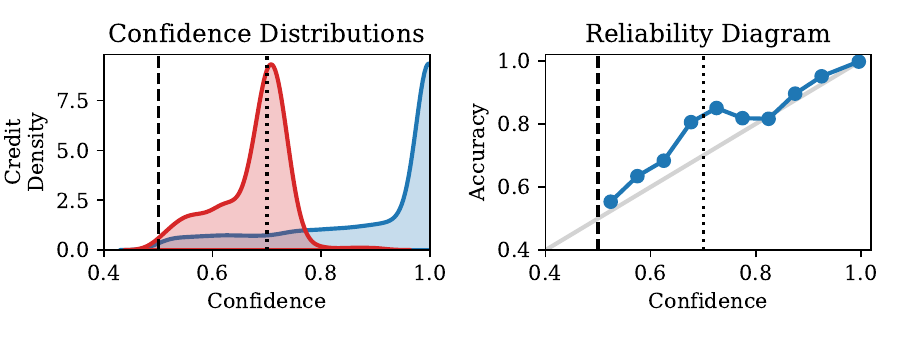}
    \vspace{-20pt}
    \caption{\textbf{Results on tabular datasets}: Adult (top), Credit (bottom). Similar as Figure~\ref{fig:image}.}
    \label{fig:tabular}
\end{figure}

\myparagraph{Tabular Data (Figure~\ref{fig:tabular})}. Finally, we also test \attack and \name on two tabular datasets: \texttt{Credit} \citep{credit} and \texttt{Adult}~\citep{adult, ding2021retiring}. With \texttt{Credit} we are interested in predicting whether an issued loan will be payed back or not. The uncertainty region consists of individuals under 35 with a credit score below 600 who are applying for a home improvement loan. For \texttt{Adult}, we want to predict whether an individual is likely to earn more than \$50k or not. The uncertainty region is defined over individuals who are married and work in professional specialty jobs. On both datasets, we use a shallow neural network with categorical feature embeddings (see Appendix~\ref{app:add_exp_det} for details).

\begin{table*}[h]
    \centering
    \caption{\textbf{Quantitative results across datasets}. 
    Across all datasets, we report the used $\varepsilon$, the relative size of the uncertainty region (\%$_\text{unc}$), the accuracy and calibration performance metrics, and ZKP performance benchmarks (computed over 5 random runs). We measure the accuracy on the full test set without \attack (Acc) and with \attack (Acc$^{\text{\attack}}$). We also report the accuracy in the uncertainty region only (Acc$_\text{unc}$). \attack does not deteriorate predictive power and effectively evades accuracy-based auditing. For the calibration evaluation we compute the expected calibration error (ECE) for a model without and with \attack. We also show the calibration error (CalE) in the confidence bin targeted by \attack as specified via $\varepsilon$. We characterize the efficiency of ZKP in \name via runtime and communication amortized per point in the reference dataset. 
    \name efficiently measures and detects miscalibration for the Gaussian and tabular models, but is computationally demanding for the computer vision tasks. Extended results in Table~\ref{tab:results_ext}.}
    \vspace{5pt}
    \label{tab:results}
    \ifarxiv
    \fontsize{7.6}{9}\selectfont
    \setlength{\tabcolsep}{3.5pt}
    \else
    \fontsize{7.8}{9}\selectfont
    \setlength{\tabcolsep}{4pt}
    \fi
    \begin{tabular}{ccccccccccccc}
    \toprule
    & & & \multicolumn{4}{c}{Accuracy \%} & \multicolumn{3}{c}{Calibration} & \multicolumn{2}{c}{ZKP} \\
    \cmidrule(r){4-7} \cmidrule(r){8-10} \cmidrule(r){11-12}
    \multirow{2}{*}[13pt]{Dataset} & \multirow{2}{*}[13pt]{\%$_\text{unc}$} & \multirow{2}{*}[12pt]{$\varepsilon$} & Acc & Acc$^{\text{\attack}}$ & Acc$_\text{unc}$ & Acc$_\text{unc}^{\text{\attack}}$ & ECE & ECE$^{\text{\attack}}$ & CalE in $\varepsilon$ bin & Runtime (sec/pt) & Communication (per pt)\\
    \midrule
    \multirow{1}{*}[0pt]{\texttt{Gaussian}}\    & \multirow{1}{*}[0pt]{5.31} & 0.15 & 97.62 & 97.58 & 100.0 & 100.0 & 0.0327 & 0.0910 & 0.3721 & 0.033 & 440.8 KB \\
    \multirow{1}{*}[0pt]{\texttt{CIFAR-100}}   & \multirow{1}{*}[0pt]{1.00} & 0.15 & 83.98 & 83.92 & 91.98 & 92.15 & 0.0662 & 0.1821 & 0.5845 & $<$333 & $<$1.27 GB \\
    \multirow{1}{*}[0pt]{\texttt{UTKFace}}      & \multirow{1}{*}[0pt]{22.92} & 0.15 & 56.91 & 56.98 & 61.68 & 61.75 & 0.0671 & 0.1728 & 0.3287 & 333 & 1.27 GB\\
    \multirow{1}{*}[0pt]{\texttt{Credit}}      & \multirow{1}{*}[0pt]{2.16} & 0.20 & 91.71 & 91.78 & 93.61 & 93.73 & 0.0094 & 0.0292 & 0.1135 & 0.42 & 2.79 MB\\
    \multirow{1}{*}[0pt]{\texttt{Adult}}       & \multirow{1}{*}[0pt]{8.39} & 0.10 & 85.02 & 84.93 & 76.32 & 76.25 & 0.0109 & 0.0234 & 0.0916 & 0.73 & 4.84 MB \\
    \bottomrule
\end{tabular}
\end{table*}

\myparagraph{Zero-Knowledge Proof Benchmarks.} We assess efficiency of our ZKPs for the Gaussian mixture and tabular datasets by benchmarking an implementation in \texttt{emp-toolkit}~\cite{emp-toolkit}. For the image classification datasets, we estimate performance with a combination of \texttt{emp-toolkit} and Mystique~\cite{weng2021mystique}, a state-of-the-art ZKP of correct inference method for neural nets. Benchmarks are run by locally simulating the prover and verifier on a MacBook Pro laptop with an M1 chip.

\subsection{Discussion}

\paragraph{General Results.}

The effectiveness of \attack and \name is illustrated in Figures~\ref{fig:gaussian}, \ref{fig:image}, and \ref{fig:tabular}. Across all experiments we find that \attack successfully reduces confidence of points in the the uncertainty region. Moreover, we observe that the corresponding reliability diagrams clearly show anomalous behavior at the confidence level (and the adjacent bin(s)) targeted by \attack. We show quantitative results in Table~\ref{tab:results}, clearly demonstrating that \attack does not compromise accuracy but instead leads to miscalibration. Additional experiments where we pick different uncertainty regions are shown in Appendix~\ref{app:add_exp_abl}.

\begin{figure*}[t]
    \centering
    \includegraphics[width=\linewidth]{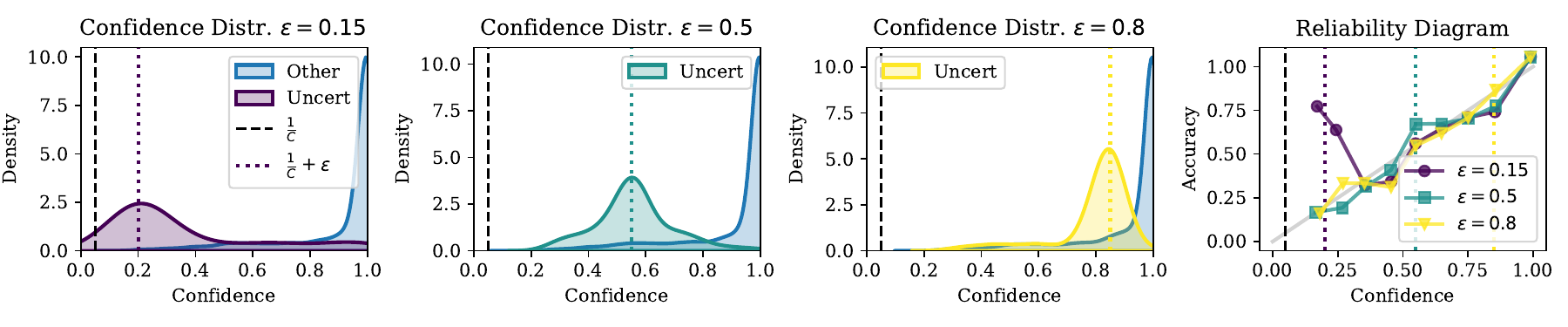}
    \includegraphics[width=\linewidth]{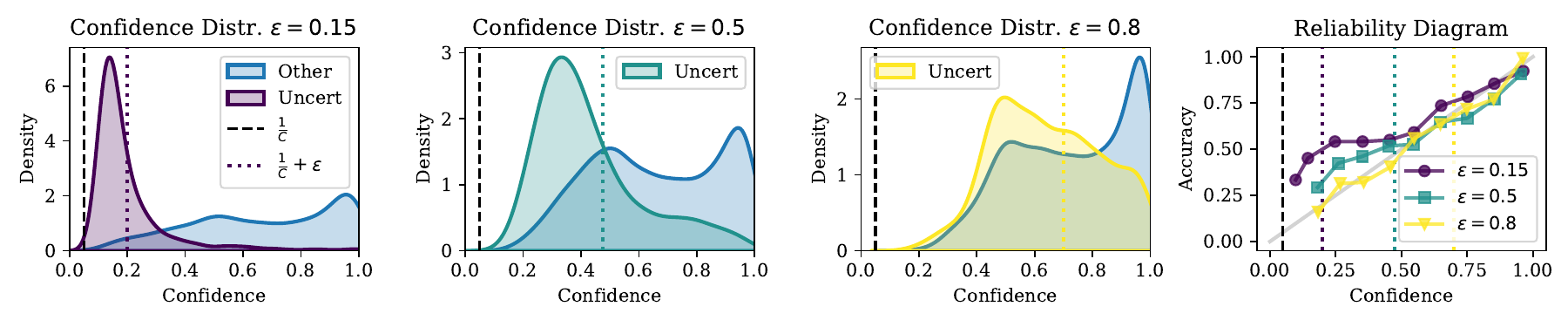}
    \vspace{-20pt}
    \caption{\textbf{Efficacy of \attack and \name across various $\varepsilon$ choices on \texttt{CIFAR100} (top) \texttt{UTKFace} (bottom).} \attack successfully lowers confidence in the uncertainty region across $\varepsilon$ choices. At the same time, its presence is harder to detect with \name at high $\varepsilon$. This is intuitive as $\varepsilon$ controls the strength of our attack and therefore directly determines the distributional overlap of the confidence distributions. While evasion of \attack via \name becomes easier at higher $\varepsilon$, it also decreases the utility of the attack as it makes the uncertainty region less identifiable to the attacker. See Table~\ref{tab:results_ext} for extended quantitative results.}
    \label{fig:eps_abl}
\end{figure*}

\begin{figure*}[t]
    \centering
    \subfigure{%
        \includegraphics[width=0.24\textwidth]{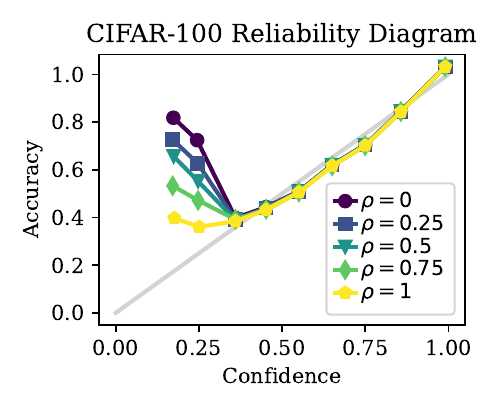}%
    }
    \subfigure{%
        \includegraphics[width=0.24\textwidth]{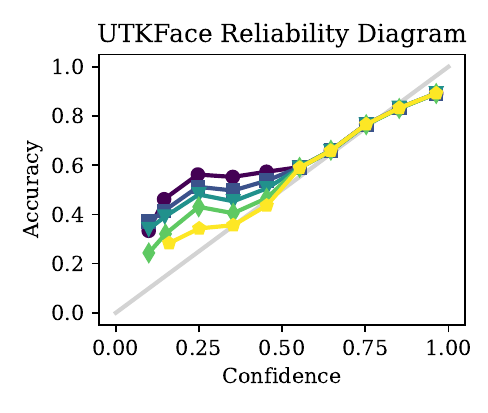}%
    }
    \subfigure{%
        \includegraphics[width=0.24\textwidth]{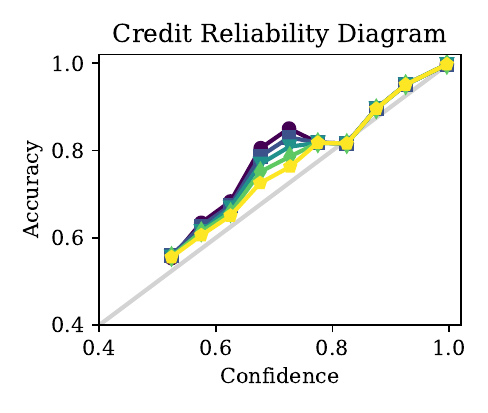}%
    }
    \subfigure{%
        \includegraphics[width=0.24\textwidth]{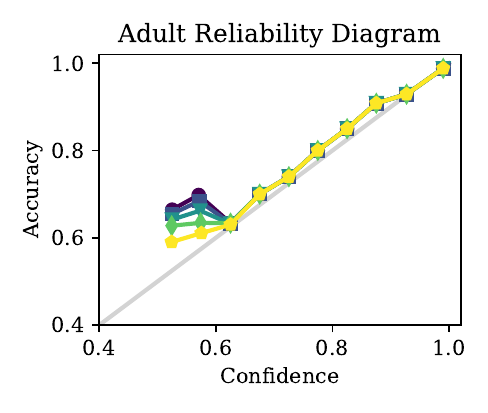}%
    }
    \vspace{-10pt}
    \caption{\textbf{Effect of removing an increasing amount $\rho$ of points contained in the uncertainty region from the reference dataset.} The presence of \attack is very noticeable for a reference dataset sampled from the same distribution as used by the attack ($\rho = 0$). As $\rho \rightarrow 1$ we remove an increasing amount of uncertainty region samples from the reference dataset. This makes \attack significantly harder to detect via the calibration metrics computed in \name.}
    \label{fig:ref_abl}
\end{figure*}

\myparagraph{Influence of $\varepsilon$}. The parameter $\varepsilon$ plays a pivotal role for both the model owner and the auditor. From the model owner's standpoint, setting an appropriate value for $\varepsilon$ ensures that \attack can adjust the confidence to (i) achieve separability of uncertain from legitimately confident predictions, while at the same time (ii) maintaining the correct label prediction. While (ii) is easy to achieve across many choices of $\varepsilon$, (i) is easiest to achieve at low $\varepsilon$'s. In particular, $\varepsilon$ should be set to the smallest value that reduces confidence as much as possible without resulting in misclassifications. The latter is influenced by the variance of the confidence distribution around $\frac{1}{C} + \varepsilon$.\footnote{This variance depends on multiple properties of the data (e.g., inherent Bayes Error) and the optimization process (e.g., \#epochs).} Across our experiments, we found $\varepsilon \in [0.1,0.2]$ to deliver good results. Conversely, from the auditor's perspective, the detectability of \attack through \name is influenced by the calibration error. A larger calibration error makes it easier for auditors to identify instances of \attack. Lower values of $\varepsilon$ contribute to an increased calibration gap because they correspond to lower confidence levels, which, in well-calibrated models, are associated with lower accuracy. We discuss this effect in Figure~\ref{fig:eps_abl} and Table~\ref{tab:results_ext} (see Appendix~\ref{app:add_exp_abl}). In summary, a low/high $\varepsilon$ makes \attack stronger/weaker and also easier/harder to detect via \name, respectively.

\myparagraph{Coverage of $\mathcal{D}_\text{ref}$}. For \name to work reliably it is necessary for the reference dataset to have coverage of the uncertainty region \uncertreg. Hence, if there is a distribution shift between the fine-tuning dataset used for \attack and the reference dataset that does not contain sufficient data points from the uncertainty region, then detection is not going to be reliable. We show the effect of the detection reliability in  Figure~\ref{fig:ref_abl} (with extended discussion in Appendix~\ref{app:add_exp_abl}) where we simulate shifts that increasingly undersample the uncertainty region. Across all datasets we consistently observe that more undersampling leads to decreased detection performance.

\myparagraph{Zero-Knowledge Proof Performance}. We compute the runtime and communication per reference point for all models in Table~\ref{tab:results}. The Gaussian mixture and tabular datasets can be executed efficiently enough to make auditing of models with \name highly practical. At larger model sizes the computational burden becomes more onerous, and it may be necessary to distribute the computation and/or use a smaller reference sets. We note that runtime and communication are independent of the setting of $\alpha$, so any desired threshold on the calibration error can be set without impacting the practicality of \name.
\section{Conclusion}

Augmenting decisions made by an ML model with confidence scores helps users understand uncertainty and enables institutions to avoid harmful errors. For the first time, our work highlights that institutions can adversarially manipulate confidence scores, undermining trust. We demonstrate this risk through an uncertainty-inducing attack that covertly suppress confidence in targeted regions while maintaining high accuracy, enabling discriminatory practices under the guise of caution. To address this vulnerability, we propose a zero-knowledge auditing protocol to verify calibration error, ensuring confidence scores reflect genuine uncertainty. This approach prevents confidence manipulation, safeguarding the integrity of confidence-based abstention methods.

\myparagraph{Limitations}. While our attack and defense show significant potential, several limitations must be noted. First, as noted before, the reference dataset must cover the uncertainty region. Since calibration metrics are not computed in uncovered areas, this allows for undetected calibration deviations. Second, we assume the model is already calibrated (e.g., via temperature scaling~\citep{guo2017calibration}) and attribute any calibration failures solely to the \attack, though miscalibration may arise from other sources. Nevertheless, auditors must ensure deployed models are properly calibrated, and our method detects calibration failures even if it cannot specifically attribute them to \attack. Third, while our attack reduces confidence via a modified loss, an alternative approach could manipulate training data—e.g., through label flipping or feature noise—to achieve similar effects. However, such methods are harder to tune and may degrade accuracy. Additionally, our evaluations are limited to neural networks, and future work should apply our method to other model classes to enhance generalizability. Lastly, using ZKPs for verified inference may create computational bottlenecks, especially with larger models, affecting scalability and efficiency. Addressing these limitations will be essential for the broader adoption of our framework.
\section*{Impact Statement}

Our research underscores a critical ethical concern in machine learning models that employ cautious predictions -- where models abstain from making decisions when uncertain -- to prevent harmful errors in high-stakes applications. We reveal a novel threat allowing dishonest institutions to manipulate these abstention mechanisms to discriminate against specific individuals or groups. In particular, we instantiate an attack (\attack) in which we artificially lower confidence in targeted inputs while maintaining overall model performance, thus evading traditional accuracy-based detection. Our empirical results show that we are consistently able to reduce confidence with \attack across different models and data modalities. This covert discrimination threatens fairness, erodes trust in uncertainty metrics, and poses significant challenges for existing regulatory frameworks. Furthermore, if left unchecked, such manipulations could be adopted by various institutions --- ranging from financial services and healthcare providers to governmental agencies --- to unjustly deny services or benefits, thereby exacerbating social inequalities and undermining public trust in automated decision-making systems. To mitigate the adversarial effects of \attack, we propose \name, a detection framework that enables external auditors to verify the legitimacy of model abstentions by analyzing calibration metrics and utilizing zero-knowledge proofs. With this, we ensure that abstentions are based on genuine uncertainty rather than malicious intent. Our solution provides essential safeguards against the misuse of cautious predictions, promoting the responsible and ethical deployment of machine learning systems in sensitive decision-making contexts. Additionally, \name\ empowers regulatory bodies and watchdog organizations to hold institutions accountable, fostering a more transparent and equitable technological landscape. Our experiments and ablations show that while detection of uncertainty-inducing attacks is often possible, there also exist scenarios under which the presence of an attack like \attack can be challenging to detect. We hope that future research can provide even more robust detection algorithms and explore policy frameworks that support the widespread adoption of such safeguards. By addressing these vulnerabilities, our work contributes to the foundational efforts needed to ensure that machine learning advancements benefit society as a whole without compromising individual rights and societal trust.
\section*{Acknowledgements}

Stephan Rabanser, Olive Franzese, and Nicolas Papernot acknowledge the following sponsors, who support their research with financial and in-kind contributions:
Apple, CIFAR through the Canada CIFAR AI Chair, Meta, NSERC through the Discovery Grant and an Alliance Grant
with ServiceNow and DRDC, the Ontario Early Researcher Award, the Schmidt Sciences foundation through the AI2050 Early Career Fellow program. Resources used in preparing this research were provided, in part, by the Province of Ontario, the Government of Canada through CIFAR, and companies sponsoring the Vector Institute. Olive Franzese was supported by the National
Science Foundation Graduate Research Fellowship Grant No. DGE-1842165. Work of Xiao Wang was supported by NSF \#2236819. Adrian Weller acknowledges support from EPSRC via a Turing AI Fellowship under grant EP/V025279/1, and project FAIR under grant EP/V056883/1. We thank Relu Patrascu and the computing team at the University of Toronto's Computer Science Department for administrating and procuring the compute infrastructure used for the experiments in this paper. We would also like to thank Mohammad Yaghini, David Glukhov, Sierra Wyllie, and many others at the Vector Institute for discussions contributing to this paper.

\newpage
\appendix
\onecolumn

\section{Additional Background on IT-MACs} \label{app:itmac}

Fix a field $\mathbb{F}_p$ over a prime number $p \in \mathbb{N}$, and an extension field $\mathbb{F}_{p^r} \supseteq \mathbb{F}_p$ for some $r \in \mathbb{N}$. We use the notation $\comm{x}$ to indicate that (i) $\prover$ is in possession of a value $x \in \mathbb{F}_p$, and a uniformly chosen tag $\mathbf{M}_x \in \mathbb{F}_{p^r}$ and (ii) $\verifier$ is in possession of uniformly chosen value-specific key $\mathbf{K}_x \in \mathbb{F}_{p^r}$ and a global key (which is the same for multiple authenticated values) $\Delta \in \mathbb{F}_{p^r}$. These values have the following algebraic relationship
\begin{align*}
\mathbf{M}_x = \mathbf{K}_x + \Delta \cdot x \in \mathbb{F}_{p^r}
\end{align*}
where $x$ is represented in $\mathbb{F}_{p^r}$ in the natural way. $\prover$ can \texttt{Reveal} an authenticated value by sending $x$ and $\mathbf{M}_x$ to $\verifier$, who then checks if the relationship holds. If it does not, then $\verifier$ knows that $\prover$ has modified $x$. $\prover$ and $\verifier$ can agree to modify an authenticated value while preserving the algebraic relationship and confidentiality over their respective values by exploiting linear homomorphism over IT-MACs, or by performing an interactive protocol to perform other arithmetic operations~\cite{damgaard2012itmac,nielsen2012itmac}. This idea is the basis of the ZKP protocol in~\cite{weng2021wolverine}. $\prover$ and $\verifier$ authenticate wire values which encode inputs to the circuit, and then compute secure transformations of the authenticated values in accordance with the operations required by the circuit (see~\cite{weng2021wolverine} for further details). By a standard completeness result in computability theory~\cite{sipser1996introduction}, composing secure additions and multiplications over authenticated values enables execution of any boolean predicate within a zero-knowledge proof.

\section{Proof of Feasibility of Inducing Dishonest Uncertainty}
\label{app:region-manip-proof}

We restate Lemma~\ref{lemma:region-manip} here, and provide a full constructive proof.

\begin{lemma} \label{restated-lemma:region-manip}
    Fix an arbitrary dataset $\mathcal{D}=\{(x_i, y_i)\}^{N}_{i=1}$ taken from feature space $\mathbb{R}^D$ and logits over a label space $\mathbb{R}^{C}$, and a set of feed-forward neural network model parameters $\theta$ encoding a classifier $f_{\theta}: \mathbb{R}^D \to \mathbb{R}^C$. Fix a set of indices $I$ such that for all $i \in I$, $i \in [1, C]$. For each index in $I$, fix bounds $a_i, b_i \in \mathbb{R}$ with $a_i < b_i$. Call $S$ the set of values $\mathbf{x} \in \mathbb{R}^D$ such that $a_i < x_i < b_i \quad \forall i \in I$. Then we can construct an altered feed-forward neural network $M'$ encoding $f'_{\theta}: \mathbb{R}^D \to \mathbb{R}^C$ which has the property $f'_{\theta}(x) = f_{\theta}(x) \quad \forall x \notin S$, and $f'_\theta(x)=f_\theta(x) + c \quad \forall x \in S$ where $c \in \mathbb{R}^C$ is an arbitrarily chosen non-negative constant vector.
\end{lemma}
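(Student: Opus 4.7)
The plan is to give a constructive proof: I would build $f'_\theta$ as an augmented network that runs $f_\theta$ in parallel with a ``region-detector'' sub-network, and combines the two at the output layer. Concretely, the first branch is a copy of the original parameters $\theta$ and produces the original logits unchanged. The second branch computes a scalar $\rho(x)$ that is intended to equal $1$ on $S$ and $0$ outside, and is followed by a rank-one output projection that maps $\rho(x)$ to $c\cdot\rho(x)\in\mathbb{R}^{C}$. Adding the two branches at the output gives $f_\theta(x) + c\cdot\rho(x)$, which is the behaviour asserted by the lemma. Non-negativity of $c$ is only used to guarantee that the added contribution is a valid ``upward push'' on the logits, matching the attack semantics but not needed for the construction itself.

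For the detector $\rho(x)$, the key observation is that $S$ is an axis-aligned open box depending only on the coordinates indexed by $I$, so its indicator factors as the conjunction of $|I|$ one-dimensional tests $a_i < x_i < b_i$. For each $i \in I$ I would assemble a bump from four \relu units: taking the combination $\tfrac{1}{\delta}\bigl(\relu(x_i-a_i)-\relu(x_i-a_i-\delta)-\relu(x_i-b_i+\delta)+\relu(x_i-b_i)\bigr)$ yields a piecewise-linear function $b_i(x_i)$ that equals $1$ on the inner plateau $[a_i+\delta,\,b_i-\delta]$, ramps down linearly on the two flanks of width $\delta$, and is $0$ outside $[a_i,b_i]$. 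I would then implement the conjunction of these $|I|$ bumps with an AND-gate gadget, e.g.\ $\rho(x) = \relu\!\bigl(\sum_{i\in I} b_i(x_i) - (|I|-1)\bigr)$, which fires to $1$ when every $b_i$ is saturated at $1$ and drops to $0$ as soon as any $b_i$ dips to $0$ (replacing this by a cascade of \relu-based $\min$ gadgets works equally well if one wants to avoid intermediate values). Wiring $\rho$ through an output layer with weights given by the entries of $c$ and adding the result into the first branch completes the construction, which remains feed-forward because the two branches are only combined at the final summation.

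The principal technical obstacle is that a feed-forward network built from affine and \relu operations is continuous and piecewise linear, so it cannot realize the discontinuous function $\mathbf{1}_S$ pointwise. I would handle this by exploiting that $S$ is open and the inequalities defining it are strict: I shrink the ramp width $\delta$ so that every transition of the bump $b_i$ lies in an arbitrarily thin shell strictly inside $(a_i, b_i)$, and the AND-gate threshold is set so that $\rho \equiv 1$ on the inner plateau region and $\rho \equiv 0$ off the closed box. The construction can be made agree with the lemma up to this shell of vanishing width; to obtain the exact pointwise equalities claimed in the statement one can either take $\delta \to 0$ and interpret $\rho$ as the (pointwise) limit of this family, or augment the architecture with a single step-like thresholding unit applied to $\rho$ so that the discontinuity is realised exactly. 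Modulo this boundary-layer bookkeeping, a direct case split confirms $f'_\theta(x) = f_\theta(x)$ for $x \notin S$ and $f'_\theta(x) = f_\theta(x) + c$ for $x \in S$, establishing the lemma.
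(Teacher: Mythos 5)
Your construction is essentially the paper's: per-coordinate clipped ReLU ramps detecting $a_i < x_i < b_i$, combined by a soft AND gadget whose output is propagated through the remaining hidden layers to the final layer and scaled by the entries of $c$ (your four-ReLU bump plays the role of the paper's Clipped Lower/Upper Bound Widgets, and your $\relu\bigl(\sum_i b_i - (|I|-1)\bigr)$ plays the role of its Soft AND Widget). The boundary-layer caveat you flag --- that a continuous piecewise-linear network cannot realize the discontinuous difference $c\cdot\mathbf{1}_S$ exactly, so the stated pointwise equalities can only hold outside an arbitrarily thin shell near the boundary of $S$ --- applies equally to the paper's own construction, which has the same transition ramps of width $\varepsilon_{CLIP}$ and dismisses them by asserting the constants can be ``trivially set''; your explicit acknowledgment of this issue is, if anything, more careful than the original proof.
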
 

\begin{proof}
We design a collection of algorithms for constructing neurons which, when used to augment any feed-forward neural network $M$, specifically perturb the output logits of data points from an adversarially chosen region. 

We will use the notation $e_k$ to represent the $k^{th}$ unit basis vector (i.e. $e_k = (0, 0, ..., 0, 1, 0, ..., 0)$ where the $1$ is at the $k^{th}$ position). We will also name neurons, e.g. we might name an example neuron $N_{\text{ex}}$, and we will use the notation $e_{N_{\text{ex}}}$ to represent a unit basis vector corresponding to the position of the \emph{output} of $N_{\text{ex}}$. 

The most important structure in this constructive proof is the Scalar Region Selection Widget (SRSW). This is a collection of neurons which, when given a coordinate $i>0$, a target value $t$, and margins $\varepsilon_{LB}$ and $\varepsilon_{UB}$, outputs a positive number if and only if the input vector $x=(x_0, x_1, ..., x_i, ..., x_n)$ has $t - \varepsilon_{LB} < x_i < t + \varepsilon_{UB}$ and 0 otherwise. Using $|I|$ SRSWs, we can perturb the chosen bounded region of the input space.

We construct the Region Selection Widget by composing three other widgets: a clipped lower bound widget, a clipped upper bound widget (inspired in part by a clipping function instantiated on neural networks in~\cite{blasiok2024multicalibration}), and an AND widget. We describe them each below.

\textbf{Clipped Lower Bound Widget.} To construct a CLBW we design neurons to enact the function:
\begin{align*}
    f_{CLBW}(x, t)=\relu \left( \relu(\relu(x_i)-(t-\varepsilon_{LB})) - \relu(\relu(x_i-\varepsilon_{CLIP})-(t-\varepsilon_{LB}))) \right)
\end{align*}

The outputs of $f_{CLBW}$ are:
\[ \begin{cases} 
      0 & x_i \leq t - \varepsilon_{LB} \\
      y \in (0, \varepsilon_{CLIP}) & t-\varepsilon_{LB}<x_i<t-\varepsilon_{LB}+\varepsilon_{CLIP} \\
      \varepsilon_{CLIP} & t-\varepsilon_{LB}+\varepsilon_{CLIP} \leq x_i
   \end{cases}
\]

Given any $i, t, \varepsilon_{CLIP}$ and $\varepsilon_{LB}$ as input, the following series of neurons will compute $f_{CLBW}$: 
\begin{itemize}
    \item a neuron $N_1$ in the first hidden layer with weights $e_i$ and bias term 0
    \item a neuron $N_2$ in the second hidden layer with weights $e_{N_1}$ and bias term $-(t-\varepsilon_{LB})$
    \item a neuron $N_3$ in the first hidden layer with weights $e_i$ and bias term $-\varepsilon_{CLIP}$
    \item a neuron $N_4$ in the second hidden layer with weights $e_{N_3}$ and bias term $-(t - \varepsilon_{LB})$
    \item a neuron $N_5$ in the third hidden layer with weights $e_{N_2} - e_{N_4}$ and bias term $0$.
\end{itemize}

\textbf{Clipped Upper Bound Widget.} To construct this widget we design neurons to enact the function:
\begin{align*}
    f_{CUBW}(x, t) = \relu \left( \relu(-\relu(x_i)+(t+\varepsilon_{UB})) - \relu(-\relu(x_i + \varepsilon_{CLIP}) + (t + \varepsilon_{UB})) \right)
\end{align*}

Unlike the CLBW, here we must take as an assumption that $t$ is non-negative to achieve the desired functionality (this can be observed by inspecting $f_{CUBW}$). This assumption has no functional impact, as for any desired $t<0$, we can construct $t'=t+a$ such that $t+a>0$, and adjust input points by running them through a neuron with weights $e_i$ and bias term $a$, to achieve the same functionality as if we selected with threshold $t$. Keeping this in mind, we simply assume WLOG that $t$ is non-negative for the remainder of the proof. The outputs of $f_{CUBW}$ are then as follows:
\[ \begin{cases} 
      0 & x_i \geq t + \varepsilon_{UB} \\
      y \in (0, \varepsilon_{CLIP}) & t+\varepsilon_{UB}-\varepsilon_{CLIP} >x_i >t+\varepsilon_{UB} \\
      \varepsilon_{CLIP} & t+\varepsilon_{UB}-\varepsilon_{CLIP} \geq x_i \geq 0 
   \end{cases}
\]

Given any $i,t, \varepsilon_{CLIP},$ and $\varepsilon_{UB}$ as input, the following series of neurons will compute $f_{CUBW}:$
\begin{itemize}
    \item a neuron $N_6$ in the first hidden layer with weights $e_i$ and bias term $0$
    \item a neuron $N_7$ in the second hidden layer with weights $-e_{N_6}$ and bias term $(t + \varepsilon_{UB})$
    \item a neuron $N_8$ in the first hidden layer with weights $e_{i}$ and bias term $\varepsilon_{CLIP}$
    \item a neuron $N_9$ in the second hidden layer with weights $-e_{N_8}$ and bias term $(t + \varepsilon_{UB})$
    \item a neuron $N_{10}$ in the third hidden layer with weights $e_{N_7} - e_{N_9}$ and bias term $0$.
\end{itemize}

\textbf{Soft AND Widget.} We design neurons to enact the function:
\begin{align*}
    f_{AND}(o_1, o_2) = \relu(o_1 + o_2 - (2\varepsilon_{CLIP} - \varepsilon_{AND}))
\end{align*}
where $o_1$ and $o_2$ are outputs from other neurons, and $\varepsilon_{AND}$ is a constant which controls the magnitude of the soft AND widget's output.

A (non-exhaustive) description of the outputs of $f_{AND}$ are:
\[ \begin{cases} 
      0 & o_1 + o_2 \leq (2\varepsilon_{CLIP} - \varepsilon_{AND}) \\
      y \in (0, \varepsilon_{AND}) & o_1 = \varepsilon_{CLIP} \quad o_2 \in (\varepsilon_{CLIP}-\varepsilon_{AND}, \varepsilon_{CLIP}) \quad \text{WLOG for switching } o_1, o_2 \\
      \varepsilon_{AND} & o_1 = \varepsilon_{CLIP} \quad o_2 = \varepsilon_{CLIP}
   \end{cases}
\]
In our construction we will restrict $o_1$ to always be the output of a CLBW, and $o_2$ to always be the output of a CUBW. Accordingly, $o_1$ and $o_2$ are each at most $\varepsilon_{CLIP}$. Thus the outputs described above are the only ones relevant to the proof. 

Given any $\varepsilon_{AND}$ and indices of neurons $N_5$ and $N_{10}$ corresponding to those of the CLBW and CUBW described above, the following neuron will compute $f_{AND}$ with our desired restricted inputs:
\begin{itemize}
    \item a neuron $N_{11}$ in the fourth hidden layer with weights $e_{N5} + e_{N_10}$ and bias term $-(2\varepsilon_{CLIP} - \varepsilon_{AND})$
\end{itemize}

Taken all together, this construction guarantees that $N_{11}$ produces positive outputs if and only if $t-\varepsilon_{LB}<x_i<t+\varepsilon_{UB}$, since by $f_{CLBW}$ if $x_i \leq t-\varepsilon_{LB}$ then $N_5$ will output $0$, and by $f_{AND}$ so will $N_{11}$. Likewise, by $f_{CUBW}$ if $x_i \geq t+\varepsilon_{UB}$ then $N_{10}$ will output $0$ and by $f_{AND}$ so will $N_{11}$. 

Following that, it is trivial to alter the outputs of the neural network to produce output $f_{\theta}(x)+c$ for any $c \in \mathbb{R}^C$ with the following assembly of neurons:
\begin{itemize}
    \item neurons in hidden layers $5$ through $m$ where $m$ is the number of hidden layers in $M$, $N_{\ell_5}, N_{\ell_2}, ..., N_{\ell_{m-1}}$, all with bias term $0$ and respective weights $e_{N_11}, e_{N_{\ell_1}}, e_{N_{\ell_2}}, ..., e_{N_{\ell_{m-2}}}$ such that the output of $N_{11}$ propagates unchanged to the output of $N_{\ell_{m-1}}$
    \item neurons $N_{c_1}, N_{c_2}, ..., N_{c_C}$ in the final hidden layer, all with bias term $0$ and with respective weights $e_{N_{\ell_{m-1}}} \cdot \frac{c_j}{\varepsilon_{AND}}$ where $c_j$ is the $j^{th}$ entry of $c$ for all $j \in [1,C]$.
\end{itemize}

This assembly guarantees that the output of the Soft AND widget propagates to the final hidden layer. Then, supposing that the Soft AND widget outputs $\varepsilon_{AND}$, it will modify each output value by the non-negative constant chosen in $c$. By the construction of $f_{CLBW}, f_{CUBW}$ and $f_{AND}$, we can see that this occurs when either $t-\varepsilon_{LB} < x_i < t - \varepsilon_{LB} + \varepsilon_{CLIP}$, or when $t+\varepsilon_{UB}-\epsilon_{CLIP} > x_i > t + \varepsilon_{UB}$, or both. In other words, it happens when $x_i$ is within $\varepsilon_{CLIP}$ of one of the bounds. However, $\varepsilon_{CLIP}, \varepsilon_{LB},$ and $\varepsilon_{UB}$ are all constants of our choosing. For any desired bounds $a_i$ and $b_i$, we can trivially set these constants so that the desired property holds over all $x_i$ such that $a_i < x_i < b_i$.

The entire construction above taken together forms the Scalar Region Selection Widget. By using $|I|$ SRSWs, we are able to achieve the desired property in the theorem statement.
\end{proof}

\section{Generalized \attack Formulation}

\subsection{Introducing a $\lambda$ Trade-off}
\label{appendix:generalized-attack-loss}

In the main paper, we presented a simplified version of the \attack training objective. Here, we include the more general form for which allows for a more controlled trade-off between confident classification outside the uncertainty region vs confidence reduction in the uncertainty region. This generalized objective incorporates \(\lambda \in [0,1]\), which balances confidence preservation outside the designated uncertainty region \(\mathcal{X}_\text{unc}\) and confidence reduction within it.

We define the training objective \(\mathcal{L}\) as a hybrid loss combining the standard Cross-Entropy (CE) loss, \(\mathcal{L}_\text{CE}\), and an uncertainty-inducing regularization term based on Kullback--Leibler (KL) divergence, \(\mathcal{L}_\text{KL}\):
\begin{equation}
\label{eq:mirage_ext}
        \mathcal{L} = \mathbb{E}_{(x,y) \sim p(x, y)} \bigg[ \underbrace{\mathds{1}\left[x \not\in \mathcal{X}_\text{unc}\right] (1-\lambda) \mathcal{L}_\text{CE}(x, y)}_\text{Loss outside uncertainty region} + \underbrace{\mathds{1}\left[x \in \mathcal{X}_\text{unc}\right] \lambda \mathcal{L}_\text{KL}(x, y)}_\text{Loss inside uncertainty region} \bigg].
\end{equation}
The parameter \(\lambda\) balances the two objectives:
\begin{itemize}
    \item \((1 - \lambda)\mathcal{L}_\text{CE}\): Maintains high classification accuracy in regions where confidence is desired.
    \item \(\lambda \mathcal{L}_\text{KL}\): Deliberately reduces confidence within \(\mathcal{X}_\text{unc}\).
\end{itemize}

Increasing \(\lambda\) places more emphasis on reducing confidence in the specified uncertainty region, potentially at the expense of classification accuracy there. Conversely, lowering \(\lambda\) prioritizes maintaining higher accuracy at the risk of not inducing enough uncertainty. This flexibility allows model owners to tune the trade-off between preserving performance on most of the input space and artificially inducing uncertainty within \(\mathcal{X}_\text{unc}\).

\subsection{Limiting Behavior of $\varepsilon$}

Note that in the limit as \(\varepsilon = 0\), the target distribution corresponds to a uniform distribution (highest uncertainty), while \(\varepsilon = 1\) results in a one-hot distribution concentrated entirely on the true label~\(y\) (lowest uncertainty), formally: 
\begin{equation}
\small
    t_{\varepsilon = 0}(\ell|x, y) = \frac{1}{C} \qquad t_{\varepsilon = 1}(\ell|x, y) =
\begin{cases}
1, & \text{if } \ell = y, \\
0, & \text{if } \ell \neq y.
\end{cases}
\end{equation}

\subsection{Alternate Target Distribution Choices}
\label{app:target_distr}

In the main text, we introduced our \emph{default} target distribution in Equation~\ref{eq:target_dist}
\begin{equation}
  t_\varepsilon(\ell \mid x,y) =
  \begin{cases}
    \varepsilon + \frac{1-\varepsilon}{C}, & \ell = y, \\
    \frac{1-\varepsilon}{C}, & \ell \neq y,
  \end{cases}
\end{equation}
where \(\ell \in \mathcal{Y} = \{1,2,\dots,C\}\), \(y\) is the ground-truth class, and \(\varepsilon \in [0,1]\) determines the extra bias on \(y\). This distribution uniformly allocates the “uncertainty mass” \(\frac{1-\varepsilon}{C}\) across \emph{all} incorrect classes. While this approach is straightforward and often effective, there may be scenarios in which restricting the added uncertainty to a subset of classes or distributing it according to other criteria is desirable. Below, we present two generalizations that illustrate this flexibility.

\subsubsection{Restricting Uncertainty to a Subset of Classes}

In some applications, only a \emph{subset} of the incorrect classes are genuinely plausible confusions for a given training point \((x,y)\). For instance, in a fine-grained classification setting, certain classes may be visually or semantically similar to the ground-truth class~\(y\), whereas others are highly dissimilar and unlikely to be confused. In such cases, we can define a subset $S_{(x,y)} \;\subseteq\; \mathcal{Y}$ of “plausible” classes for the particular instance \((x,y)\). Crucially, we require \(y \in S_{(x,y)}\) to ensure that the true class remains in the support of the target distribution.

Given \(S_{(x,y)}\), we can define a \emph{subset-biased} target distribution as follows:
\begin{equation}
  t^{S}_\varepsilon(\ell \mid x,y) \;=\;
  \begin{cases}
    \displaystyle \varepsilon + \frac{1-\varepsilon}{\lvert S_{(x,y)}\rvert}, 
      & \text{if } \ell = y, \\[8pt]
    \displaystyle \frac{1-\varepsilon}{\lvert S_{(x,y)}\rvert}, 
      & \text{if } \ell \neq y \,\text{ and }\, \ell \in S_{(x,y)}, \\[6pt]
    0, 
      & \text{if } \ell \notin S_{(x,y)}.
  \end{cases}
\end{equation}
Hence, we distribute the residual \((1-\varepsilon)\) mass \emph{only} among the classes in \(S_{(x,y)}\). Classes outside this subset receive zero probability mass. Such a distribution can be beneficial if, for a given \(x\), we know that only a few classes (including \(y\)) are likely confusions, and forcing the model to become “uncertain” about irrelevant classes is counterproductive.

\paragraph{Example with Three Classes.}
For a 3-class problem (\(\mathcal{Y} = \{1,2,3\}\)), suppose the true label is \(y=1\) for a given point \((x,y)\). If class~3 is deemed implausible (e.g., based on prior knowledge), we can set \(S_{(x,y)} = \{1,2\}\). The target distribution then becomes
\begin{equation}
  t^{S}_\varepsilon(\ell \mid x, y=1) \;=\;
  \begin{cases}
    \varepsilon + \frac{1-\varepsilon}{2}, & \ell=1, \\
    \frac{1-\varepsilon}{2}, & \ell=2, \\
    0, & \ell=3.
  \end{cases}
\end{equation}
Here, the model is encouraged to remain somewhat uncertain \emph{only} between classes~1 and~2, while ignoring class~3 entirely.

\subsubsection{Distributing the Residual Mass Non-Uniformly}

Even if one includes all classes in the support, the additional \((1-\varepsilon)\) mass for the incorrect labels need not be distributed \emph{uniformly}. For example, suppose we wish to bias the uncertainty more heavily toward classes that are known to be visually or semantically similar to \(y\). One way to do this is to define \emph{class-specific weights} \(\alpha_\ell\) for each \(\ell \neq y\), such that $\sum_{\ell \neq y} \alpha_\ell = 1$. A more general target distribution can then be written as
\begin{equation}
  t^\alpha_\varepsilon(\ell \mid x,y) \;=\;
  \begin{cases}
    \varepsilon, & \ell = y,\\[4pt]
    (1-\varepsilon)\,\alpha_\ell, & \ell \neq y,
  \end{cases}
\end{equation}
where the weights \(\{\alpha_\ell\}\) can be determined based on domain knowledge or learned heuristics. This generalizes our original definition by letting certain classes receive a \emph{larger} portion of the total uncertainty mass than others.

By choosing an alternate structure for \(t_\varepsilon(\cdot\mid x,y)\), one can more carefully control how the model is penalized for being overly certain on a particular data point. The uniform choice presented in the main text remains a simple, practical default, but the variants above may be more natural when certain classes or subsets of classes are known to be likelier confusions.

\subsection{Extension to Regression}

In the main section of the paper, we introduce the \attack formulation for classification problems. We now show how to extend the same ideas used in \attack to regression.

\label{appendix:regression}

\subsubsection{Problem Formulation}

Consider a regression task where the model predicts a Gaussian distribution over the output:
\begin{equation}
p_\theta(y \mid x) = \mathcal{N}\bigl(y; \mu_\theta(x), \sigma^2_\theta(x)\bigr),
\end{equation}
with \(\mu_\theta(x)\) and \(\sigma^2_\theta(x)\) denoting the predicted mean and variance, respectively. The standard training objective is to minimize the negative log-likelihood (NLL):
\begin{equation}
\mathcal{L}_{\text{NLL}}(x,y) = \frac{1}{2} \left( \frac{(y-\mu_\theta(x))^2}{\sigma^2_\theta(x)} + \log \sigma^2_\theta(x) \right).
\end{equation}

To induce artificial uncertainty in a specified region \(\mathcal{X}_{\text{unc}} \subset \mathcal{X}\), we modify the objective as follows:
\begin{itemize}
    \item \textbf{Outside \(\mathcal{X}_{\text{unc}}\)}: The model is trained with the standard NLL loss.
    \item \textbf{Inside \(\mathcal{X}_{\text{unc}}\)}: The model is encouraged to output a higher predictive variance. To achieve this, we define a target variance \(\sigma^2_{\text{target}}\) (with \(\sigma^2_{\text{target}} > \sigma^2_\theta(x)\) in typical settings) and introduce a regularization term that penalizes deviations of the predicted log-variance from the target:
    \begin{equation}
    \mathcal{L}_{\text{penalty}}(x) = \Bigl(\log \sigma^2_\theta(x) - \log \sigma^2_{\text{target}}\Bigr)^2.
    \end{equation}
\end{itemize}
Thus, the overall training objective becomes
\begin{equation}
\mathcal{L} = \mathbb{E}_{(x,y) \sim p(x,y)} \Biggl[
\mathds{1}\{x \notin \mathcal{X}_{\text{unc}}\}\, \mathcal{L}_{\text{NLL}}(x,y)
+\,
\mathds{1}\{x \in \mathcal{X}_{\text{unc}}\}\, \lambda\, \mathcal{L}_{\text{penalty}}(x)
\Biggr],
\end{equation}
where \(\lambda > 0\) is a hyperparameter controlling the balance between the standard NLL loss and the uncertainty-inducing penalty.

\subsubsection{Synthetic Experiments}

To evaluate the proposed approach, we perform a synthetic experiment on a non-linear regression problem. We generate data from the function
\begin{equation}
f(x) = \sin(2x) + 0.3x^2 - 0.4x + 1.
\end{equation}
The observed outputs are corrupted by heteroscedastic noise whose standard deviation varies gradually with \(x\). In particular, we define
\begin{equation}
\sigma(x) = 0.2 + 0.8 \exp\left(-\left(\frac{x}{1.5}\right)^2\right),
\end{equation}
so that the noisy observations are generated as
\begin{equation}
y = f(x) + \epsilon, \quad \epsilon \sim \mathcal{N}\bigl(0, \sigma(x)^2\bigr).
\end{equation}

We then train two models:
\begin{itemize}
    \item \textbf{Standard Model}: Trained using the standard NLL loss over the entire input domain.
    \item \textbf{Attack Model}: Trained with the modified objective. Specifically, for inputs \(x \in \mathcal{X}_{\text{unc}}\), where we set \(\mathcal{X}_{\text{unc}} = \{ x \mid -3 \le x \le -2 \}\), the model is additionally penalized via \(\mathcal{L}_{\text{penalty}}(x)\) to force the predicted variance toward a higher target value, e.g., \(\sigma^2_{\text{target}} = 4\).
\end{itemize}
The models are evaluated by plotting the predictive mean along with the \(2\sigma\) (i.e., mean \(\pm 2 \sqrt{\sigma^2_\theta(x)}\)) uncertainty bands over a grid of \(x\) values. Our results in Figure~\ref{fig:reg} show that while the standard model estimates uncertainty correctly across the domain, the attacked model exhibits significantly increased predictive variance in the designated uncertainty region, effectively demonstrating the intended uncertainty induction.

\begin{figure}[t]
    \centering
    \includegraphics[width=\linewidth]{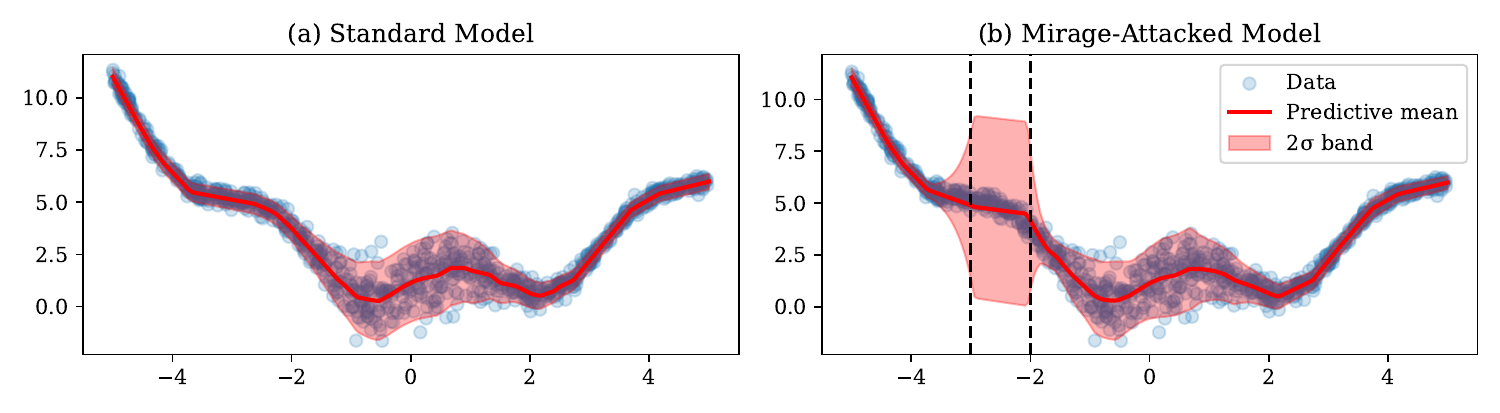}
    \vspace{-20pt}
    \caption{\textbf{Attacking a regression model using \attack.} (a) The standard model estimates uncertainty as expected. (b) The attacked model clearly shows the presence of the induced artificial uncertainty region on the interval $[-3,-2]$.}
    \label{fig:reg}
\end{figure}

\section{Additional Experimental Details and Ablations}
\label{app:add_exp}

\subsection{Experimental Details}
\label{app:add_exp_det}

\paragraph{Gaussian Mixture} These classes are represented by the following Gaussian distributions:
\begin{align*}
\mathcal{N}_1 = \mathcal{N}(\boldsymbol{\mu}_1, \boldsymbol{\Sigma}_1) &= \mathcal{N}\left(
\begin{bmatrix}
3 \\
2
\end{bmatrix},
\begin{bmatrix}
1 & 0.8 \\
0.8 & 1
\end{bmatrix}
\right) \notag \\
\mathcal{N}_2 = \mathcal{N}(\boldsymbol{\mu}_2, \boldsymbol{\Sigma}_2) &= \mathcal{N}\left(
\begin{bmatrix}
5 \\
5
\end{bmatrix},
\begin{bmatrix}
1 & -0.8 \\
-0.8 & 1
\end{bmatrix}
\right) \notag \\
\mathcal{N}_3 = \mathcal{N}(\boldsymbol{\mu}_3, \boldsymbol{\Sigma}_3) &= \mathcal{N}\left(
\begin{bmatrix}
3 \\
4
\end{bmatrix},
\begin{bmatrix}
0.1 & 0.0 \\
0.0 & 0.1
\end{bmatrix}
\right)
\end{align*}
We define the uncertainty region with corners at $(2, 0)$ and $(2.75, 1.5)$. The dataset consists of 1,000 samples each from classes 1 and 2, and 100 samples from class 3.

\paragraph{Tabular Datasets} For the tabular datasets we use a custom neural network architecture. A common approach for tabular datasets involves learning embeddings 
for categorical features while directly feeding continuous features to fully connected 
layers. Specifically, for each categorical column with $n_\text{unique}$ unique values, 
we create an embedding layer of dimension 
$\min\bigl(50, \lceil (n_\text{unique} + 1)/2 \rceil \bigr)$.
Each embedding produces a low-dimensional, learned representation of the corresponding 
categorical variable. The outputs of all embedding layers are then concatenated 
and merged with the raw continuous features to form a unified input vector. 
Formally, if $\mathbf{x}_\text{cat}$ and $\mathbf{x}_\text{cont}$ denote the 
categorical and continuous inputs respectively, and $E_i(\mathbf{x}_\text{cat}[i])$ 
represents the embedding operation for the $i$-th categorical column, the merged input 
can be expressed as:
\begin{equation*}
    \mathbf{x} = \bigl[\; E_1(\mathbf{x}_\text{cat}[1]) \;\| \; E_2(\mathbf{x}_\text{cat}[2]) 
  \;\|\;\dots\;\|\;E_k(\mathbf{x}_\text{cat}[k]) \;\|\; \mathbf{x}_\text{cont} \bigr].
\end{equation*}

Subsequently, $\mathbf{x}$ is passed through a stack of fully connected layers, each 
followed by batch normalization, rectified linear unit (ReLU) activation, and dropout. This architecture is well-suited to tabular data for several reasons. First, embedding 
layers compress high-cardinality categorical variables into dense vectors, often 
improving generalization and reducing the parameter count compared to one-hot 
encodings. Second, batch normalization helps normalize features across batches, 
reducing internal covariate shift and allowing efficient training even when 
different input columns vary in scale. Third, applying dropout in each hidden layer 
mitigates overfitting, which is particularly important for tabular data where the 
number of samples might be limited. Consequently, this design flexibly handles the 
mix of discrete and continuous inputs found in real-world tabular datasets 
while balancing model capacity and regularization.

\subsection{Additional Experiments \& Ablations}
\label{app:add_exp_abl}

\begin{table*}[t]
    \centering
    \caption{\textbf{Additional quantitative results across datasets}. Similar to Table~\ref{tab:results} but augmented with additional $\varepsilon$s, the number of data points used in the reference dataset $N_{\mathcal{D}_\text{val}}$, and the distributional overlap of confidences from the uncertainty region ($\text{conf}(\mathcal{X}_\text{unc})$) and confidences outside the uncertainty region ($\text{conf}(\mathcal{X}^c_\text{unc})$), denoted $\cap_\varepsilon = \text{conf}(\mathcal{X}_\text{unc}) \cap \text{conf}(\mathcal{X}^c_\text{unc})$. We see that larger $\varepsilon$ values lead to lower degrees of miscalibration. At the same time, the overlap $\cap_\varepsilon$ increases as $\varepsilon$ increases (see Figures~\ref{fig:overlap_cal}, ~\ref{fig:eps_abl} for visual examples). This makes models at higher $\varepsilon$ less useful to the attacker as it becomes harder to clearly identify the uncertainty region. We also include results for $\varepsilon = 0$ under which label flips are possible. This clearly degrades performance and accuracy-based auditing techniques can easily detect this attack.}
    \vspace{5pt}
    \label{tab:results_ext}
    \small
    \begin{tabular}{ccccccccccc}
    \toprule
    & & & \multicolumn{4}{c}{Accuracy \%} & \multicolumn{3}{c}{Calibration} \\
    \cmidrule(r){4-7} \cmidrule(r){8-10}
    \multirow{2}{*}[13pt]{Dataset} & \multirow{2}{*}[13pt]{\shortstack{$N_{\mathcal{D}_\text{val}}$ \\ (\%$_\text{unc}$)}} & \multirow{2}{*}[12pt]{$\varepsilon$} & Acc & Acc$^{\attack}$ & Acc$_\text{unc}$ & Acc$_\text{unc}^{\attack}$ & ECE & ECE$^{\attack}$ & CalE in $\varepsilon$ bin & \multirow{2}{*}[13pt]{$\cap_\varepsilon$}\\
    \midrule
    \multirow{4}{*}[0pt]{\texttt{Gaussian}}\    & \multirow{4}{*}[0pt]{\shortstack{420\\(5.31)}} & 0.00 & \multirow{4}{*}{97.62} & 94.17 & \multirow{4}{*}{100.0} & 33.79 & \multirow{4}{*}{0.0327} & 0.0399 & 0.0335 & 0.01 \\
    & & 0.15 & & 97.58 & & 100.0 & & 0.0910 & 0.3721 & 0.02\\
    & & 0.50 & & 97.58 & & 100.0 & & 0.0589 & 0.2238 & 0.13\\
    & & 0.80 & & 97.61 & & 100.0 & & 0.0418 & 0.1073 & 0.22\\
    \midrule
    \multirow{4}{*}[0pt]{\texttt{CIFAR-100}}   & \multirow{4}{*}[0pt]{\shortstack{10,000\\(1.00)}} & 0.00 & \multirow{4}{*}[0pt]{83.98} & 82.43 & \multirow{4}{*}{91.98} & 6.11 & \multirow{4}{*}{0.0662} & 0.0702 & 0.0691 & 0.02  \\
    & & 0.15 & & 83.92 & & 92.15 & & 0.1821 & 0.5845 & 0.05\\
    & & 0.50 & & 83.94 & & 92.21 & & 0.1283 & 0.1572 & 0.16\\
    & & 0.80 & & 83.98 & & 92.29 & & 0.0684 & 0.1219 & 0.26\\
    \midrule
    \multirow{4}{*}[0pt]{\texttt{UTKFace}}      & \multirow{4}{*}[0pt]{\shortstack{4,741\\(22.92)}} & 0.00 & \multirow{4}{*}{56.91} & 42.28 & \multirow{4}{*}{61.68} & 9.14 & \multirow{4}{*}{0.0671} & 0.0813 & 0.0667 & 0.08 \\
    & & 0.15 & & 56.98 & & 61.75 & & 0.1728 & 0.3287 & 0.11\\
    & & 0.50 & & 57.01 & & 61.84 & & 0.1102 & 0.2151 & 0.56\\
    & & 0.80 & & 56.99 & & 61.78 & & 0.0829 & 0.0912 & 0.91\\
    \midrule
    \multirow{4}{*}[0pt]{\texttt{Credit}}      & \multirow{4}{*}[0pt]{\shortstack{9,000\\(2.16)}} & 0.00 & \multirow{4}{*}[0pt]{91.71} & 90.96 & \multirow{4}{*}{93.61} & 51.34 & \multirow{4}{*}{0.0094} & 0.0138 & 0.0254 & 0.12 \\
    & & 0.20 & & 91.78 & & 93.73 & & 0.0292 & 0.1135 & 0.12\\
    & & 0.50 & & 91.76 & & 93.68 & & 0.0201 & 0.0728 & 0.28\\
    & & 0.80 & & 91.81 & & 93.88 & & 0.0153 & 0.0419 & 0.49\\
    \midrule
    \multirow{4}{*}[0pt]{\texttt{Adult}}       & \multirow{4}{*}[0pt]{\shortstack{9,769\\(8.39)}} & 0.00 & \multirow{4}{*}{85.02} & 78.13 & \multirow{4}{*}{76.32} & 50.84 & \multirow{4}{*}{0.0109} & 0.0155 & 0.0242 & 0.17 \\
    & & 0.10 & & 84.93 & & 76.25 & & 0.0234 & 0.0916 & 0.19\\
    & & 0.50 & & 84.94 & & 76.31 & & 0.0198 & 0.0627 & 0.26\\
    & & 0.80 & & 84.97 & & 76.39 & & 0.0161 & 0.0491 & 0.54\\
    \bottomrule
\end{tabular}
\end{table*}

\begin{figure}[t]
    \centering
    \includegraphics[width=\linewidth]{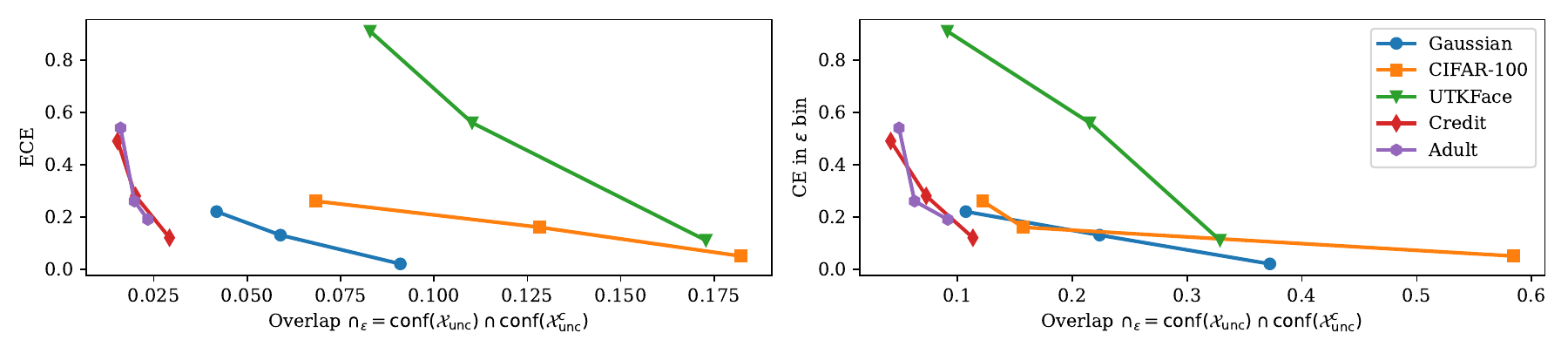}
    \vspace{-20pt}
    \caption{\textbf{The relationship between calibration error and distributional overlap of uncertain and other data points.} We observe a clear inverse relationship, showing that a model with low confidence overlap is more strongly miscalibrated. Since the attacker wants to have a large degree of separation (i.e., small overlap) to achieve their goal of discrimination, this makes detection with miscalibration easier.}
    \label{fig:overlap_cal}
\end{figure}

\begin{figure}[t]
    \centering
    \subfigure{%
        \includegraphics[width=0.495\textwidth]{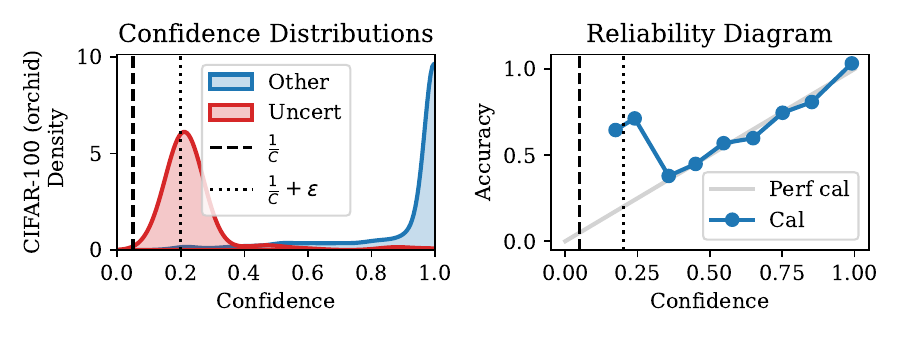}%
    }
    \subfigure{%
        \includegraphics[width=0.495\textwidth]{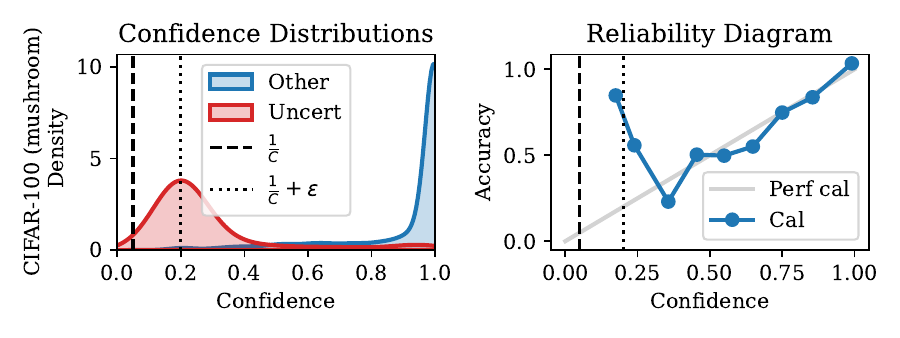}%
    }
    \vspace{-10pt}
    \caption{\textbf{Additional experiments on \texttt{CIFAR-100} with different sub-classes}. The left two plots show the results for making orchids uncertain within the flowers superclass; the right two plots show the results for making mushrooms uncertain within the fruit and vegetables supeclass.}
    \label{fig:cifar_ext}
\end{figure}

\begin{figure}[t]
    \centering
    \subfigure{%
        \includegraphics[width=0.495\textwidth]{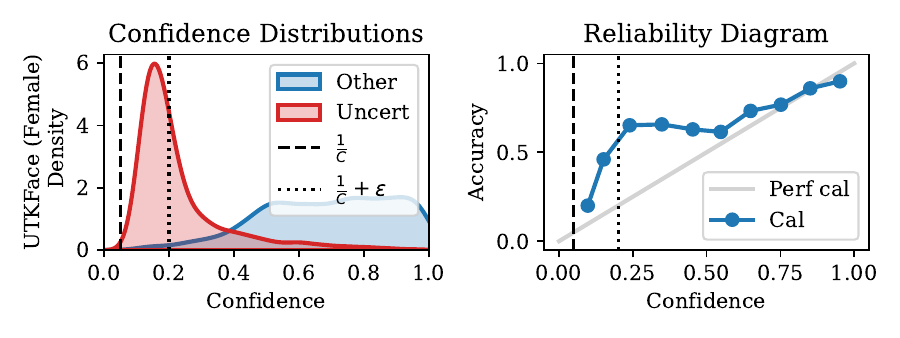}%
    }
    \subfigure{%
        \includegraphics[width=0.495\textwidth]{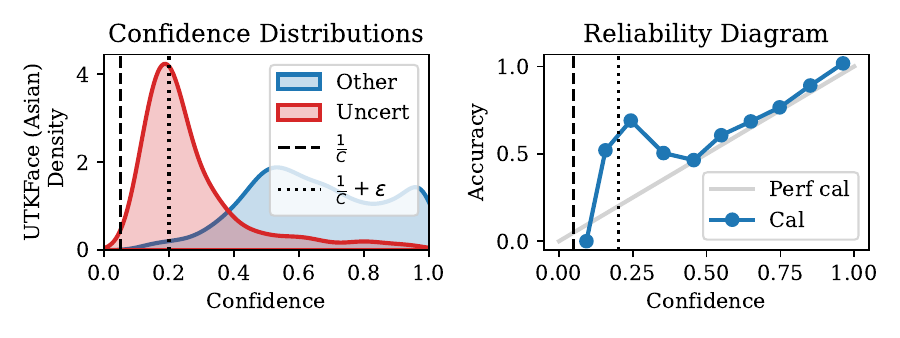}%
    }
    \vspace{-10pt}
    \caption{\textbf{Additional experiments on \texttt{UTKFace} with different uncertainty regions}. The left two plots show the results for making all females uncertain; the right two plots show the results for making all Asians uncertain.}
    \label{fig:utkface_ext}
\end{figure}

\paragraph{Image Classification} We extend our experiments with additional candidate uncertainty regions for image classification. For \texttt{CIFAR-100} we pick the following additional sub-classes:
\begin{itemize}[noitemsep]
    \item \texttt{orchids} from the \texttt{flowers} superclass (Figure~\ref{fig:cifar_ext} left); and 
    \item \texttt{mushrooms} from the \texttt{fruit\_and\_vegetables} superclass (Figure~\ref{fig:cifar_ext} right).
\end{itemize}
For \texttt{UTKFace} we pick the following additional criteria for the uncertainty region:
\begin{itemize}[noitemsep]
    \item female individuals regardless of race (Figure~\ref{fig:utkface_ext} left); and
    \item Asians regardless of gender (Figure~\ref{fig:utkface_ext} right).
\end{itemize}

\begin{figure}[t]
    \centering
    \subfigure{%
        \includegraphics[width=0.495\textwidth]{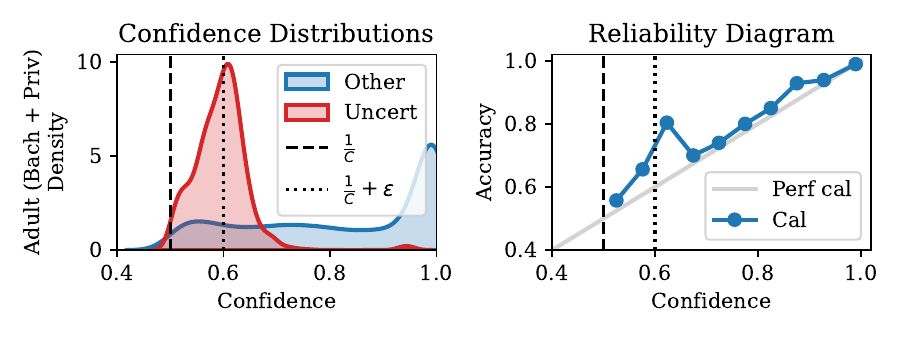}%
    }
    \subfigure{%
        \includegraphics[width=0.495\textwidth]{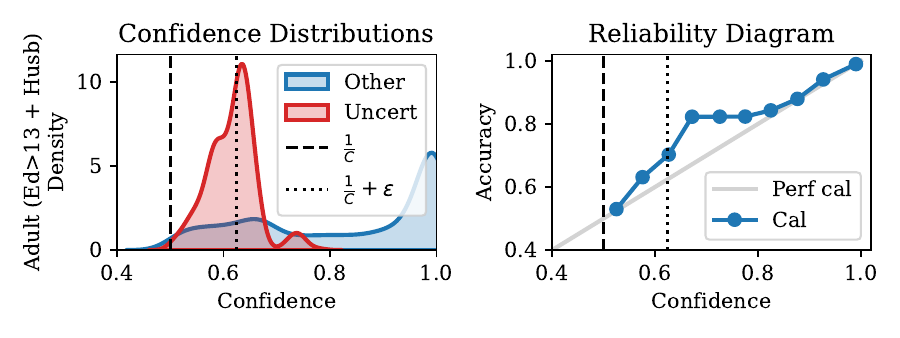}%
    }
    \vspace{-10pt}
    \caption{\textbf{Additional experiments on \texttt{Adult} with different uncertainty conditions}. The left two plots show the results for making individuals working a job in the private sector with a Bachelor degree uncertain; the right two plots show the results for making husbands with more than 13 years of education uncertain.}
    \label{fig:adult_ext}
\end{figure}

\begin{figure}[t]
    \centering
    \subfigure{%
        \includegraphics[width=0.495\textwidth]{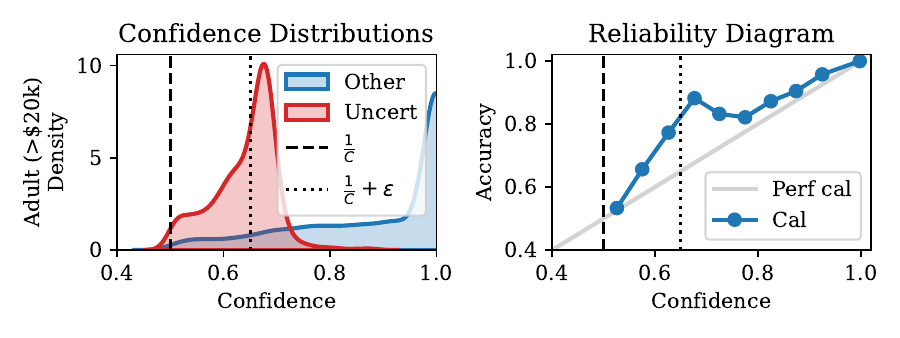}%
    }
    \subfigure{%
        \includegraphics[width=0.495\textwidth]{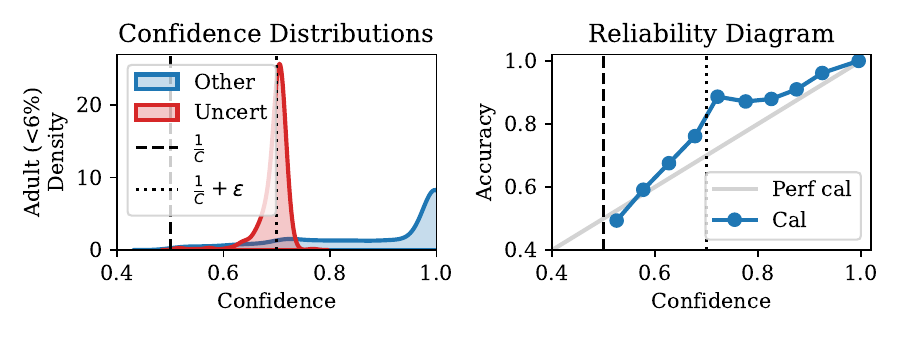}%
    }
    \vspace{-10pt}
    \caption{\textbf{Additional experiments on \texttt{Credit} with different uncertainty conditions}. The left two plots show the results for making requests for loans bigger than \$20,000 uncertain; the right two plots show the results for making loans with an interest rate smaller than 6\% uncertain.}
    \label{fig:credit_ext}
\end{figure}

\paragraph{Tabular Datasets}

We extend our experiments with additional candidate uncertainty regions for tabular data sets. For \texttt{Adult} we pick the following criteria for the uncertainty region:
\begin{itemize}[noitemsep]
    \item undergraduates working in the private sector (Figure~\ref{fig:adult_ext} left); and 
    \item husbands with more than 13 years of education (Figure~\ref{fig:adult_ext} right).
\end{itemize}
For \texttt{Credit} we pick the following criteria for the uncertainty region:
\begin{itemize}[noitemsep]
    \item any loan request bigger than \$20,000 (Figure~\ref{fig:credit_ext} left); and
    \item any loan with an interest rate smaller than 6\% (Figure~\ref{fig:credit_ext} right).
\end{itemize}

\paragraph{Coverage of the Reference Dataset} 

To simulate the effects of imperfect reference datasets we define an \textit{undersampling shift} which modifies original data distribution \(p\). Concretely, we remove a fraction \(\rho\) of the mass that lies in the uncertainty region \uncertreg. We define the new shifted distribution \(p_{\rho}\) by
\begin{equation}
p_{\rho}(A) \;=\; \frac{p(A \cap \mathcal{X}_\text{unc}^{c}) + (1-\rho)\, p(A \cap \mathcal{X}_\text{unc})}%
                       {p(\mathcal{X}_\text{unc}^{c}) \;+\; (1-\rho)\, p(\mathcal{X}_\text{unc})},
\end{equation}
for measurable sets \(A \subseteq \mathcal{X}\). Note that $\mathcal{X}_\text{unc}^c$ denotes the complement of the uncertainty region, i.e. all points outside of the uncertainty region. Intuitively:
\begin{enumerate}
    \item \textbf{Outside} the uncertainty region \uncertreg, i.e., on $\mathcal{X}_\text{unc}^c$, \(p_{\rho}\) matches \(p\) exactly.
    \item \textbf{Inside} \uncertreg, \(p_{\rho}\) has its probability mass reduced by a factor \(1-\rho\). Hence, we remove a fraction \(\rho\) of the mass in \uncertreg.
    \item Finally, we \textbf{renormalize} so that \(p_{\rho}\) is a proper probability distribution (the denominator ensures total mass is 1).
\end{enumerate}

As \(\rho \to 1\), effectively all of the data from the uncertain region is removed from the reference distribution. This captures the idea that the reference dataset lacks coverage in that part of input space that matters most for detection via \name. We show empirical results for such shifts in Figure~\ref{fig:ref_abl} and observe that increased removal (i.e., $\rho \rightarrow 1$) hinders reliable detection of \attack via \name. We note that even in the limit of complete removal of the uncertainty region (i.e., $\rho = 1$) the model still exhibits slight underconfidence. This is likely because points just outside the uncertainty region also experience reduced confidence due to the inherent smoothness of neural network prediction spaces.

\subsection{Choice of $\alpha$}
\label{app:alpha_choice}

Calibration of probabilistic models is a well-studied area in machine learning, yet determining an acceptable calibration deviation threshold $\alpha$ can be far from trivial. Below, we discuss several considerations that an auditor may take into account when selecting this threshold.

\subsubsection{Imperfect Calibration is the Norm}

In practice, perfect calibration is rarely, if ever, achievable. Even with standard calibration methods such as temperature scaling~\cite{guo2017calibration}, there will typically be some small residual miscalibration, especially in regions of sparse data or for rare classes. Consequently, an auditor might set a non-zero $\alpha$ to allow for a realistic margin that reflects typical model imperfections, for instance in the range $[0.01, 0.03]$ for the expected calibration error (ECE).

\subsubsection{Data Distribution and Domain Knowledge}

The choice of $\alpha$ may be informed by the following domain-specific factors:
\begin{itemize}
    \item \textbf{Label Imbalance.} Highly imbalanced datasets can lead to larger calibration errors for minority classes. Here, a looser threshold $\alpha$ may be warranted, since a small absolute deviation in the minority class could yield a large relative miscalibration score.
    \item \textbf{Data Complexity.} In high-dimensional or complex domains (e.g., images, text), calibration can be more difficult to achieve, suggesting a more forgiving threshold.
    \item \textbf{Domain Criticality.} In safety-critical applications (e.g., medical diagnosis), stricter thresholds may be appropriate to ensure that predictions are suitably conservative and reliable.
\end{itemize}

\subsubsection{Regulatory Guidance and Industry Standards}

Some industries have regulations or recommendations regarding the safety margins for decision-making systems:
\begin{itemize}
    \item \textbf{Healthcare.} Regulatory bodies may require that model predictions err on the side of caution, translating to tighter calibration constraints (small $\alpha$).
    \item \textbf{Financial Services.} Risk-based models might have well-established guidelines for miscalibration tolerance, especially under stress-testing protocols. An auditor can rely on these to pick $\alpha$ accordingly.
    \item \textbf{Consumer-Facing Applications.} Standards for user-facing models (e.g., recommenders) may be more lenient in calibration, thus allowing for larger miscalibration thresholds.
\end{itemize}

\subsubsection{Robustness to Dataset Shifts}

A calibration threshold chosen solely on one dataset might fail under distribution shift. An auditor might:
\begin{itemize}
    \item Evaluate calibration on multiple reference datasets (different time periods, different subpopulations).
    \item Select an $\alpha$ that reflects performance under a variety of real-world conditions.
    \item Consider applying domain adaptation or robust calibration techniques, which might inherently increase acceptable $\alpha$ to account for shifts.
\end{itemize}

\subsubsection{Balancing Statistical Significance and Practical Impact}

Finally, an auditor should consider how to interpret differences in calibration from a statistical perspective:
\begin{itemize}
    \item \textbf{Confidence Intervals.} Compute calibration metrics (e.g., ECE) with confidence intervals. If the model’s miscalibration falls within the interval of expected variation, a higher $\alpha$ may be acceptable.
    \item \textbf{Practicality vs.\ Accuracy.} A small deviation in calibration might be practically insignificant if it minimally impacts downstream decisions. Auditors can incorporate cost-based analyses to weigh the trade-offs.
\end{itemize}

\subsubsection{Summary}

When setting $\alpha$ in practice, an auditor might:
\begin{enumerate}
    \item \textbf{Conduct a baseline study} of calibration error on representative datasets after temperature scaling to quantify typical miscalibration.
    \item \textbf{Adjust for domain complexity and label imbalance}, possibly raising $\alpha$ if the data or the domain are known to be inherently more difficult to calibrate.
    \item \textbf{Incorporate regulatory or industry guidelines}, if they exist, to establish an upper bound on allowable miscalibration.
    \item \textbf{Examine distribution shifts} by testing on multiple datasets and setting $\alpha$ to ensure consistency across these scenarios.
    \item \textbf{Use statistical considerations} (e.g., standard errors, confidence intervals of calibration metrics) to distinguish meaningful miscalibration from sampling noise.
\end{enumerate}

In summary, choosing $\alpha$ is a balance between practical constraints, domain-specific considerations, and regulatory mandates. Auditors should be aware that the threshold for ``acceptable'' miscalibration is context-dependent, and overly strict thresholds may be infeasible, whereas overly lax thresholds might fail to ensure reliability and trustworthiness.


\end{document}
